\newif \ifIEEE \IEEEtrue
\newif \ifCCS \CCStrue
\newif \ifUSENIX \USENIXtrue

\newif \iffull \fulltrue

\IEEEfalse
\CCSfalse
\USENIXfalse
\fulltrue

\iffull
\documentclass[11pt]{article}
\usepackage[utf8]{inputenc}
\usepackage[letterpaper,margin=1in,bmargin=1.5in]{geometry}
\usepackage{enumitem}
\fi

\ifIEEE
\errorcontextlines=200
\documentclass[letterpaper,conference]{IEEEtran}
\pagestyle{plain}
\usepackage[noadjust]{cite}
\usepackage{etoolbox}
\makeatletter
    \patchcmd{\@IEEEsectpunct}{:}{}{}{}
    \setcounter{secnumdepth}{3}
    
    \setlength{\parskip}{0ex \@plus 0.2ex \@minus 0.1ex}
    \renewcommand{\paragraph}{\@startsection{paragraph}{4}{\z@}{1ex \@plus 0.1ex \@minus 0.1ex}{0em}{\normalfont\normalsize\bfseries\boldmath}}
    \renewcommand{\section}{\@startsection{section}{1}{\z@}{1.5ex plus 1.5ex minus 0.5ex}{1ex plus .2ex}{\normalfont\normalsize\centering\scshape}}
    \renewcommand{\subsection}{\@startsection{subsection}{2}{\z@}{1.5ex plus 1.5ex minus 0.5ex}{1ex plus .2ex}{\normalfont\normalsize\itshape}}%
\makeatother

\usepackage{enumitem}
\setlist[itemize]{itemsep=1ex}
\setlist[enumerate]{itemsep=1ex}
\fi

\ifCCS
\documentclass[sigconf]{acmart}
\fancyhf{}
\fancyhead[C]{Anonymous submission \#9999 to ACM CCS 2021} %
\fancyfoot[C]{\thepage}

\setcopyright{none}
\acmConference[Anonymous Submission to ACM CCS 2021]{ACM Conference on Computer and Communications Security}{Due Jan 20 2021}{Seoul, South Korea}
\acmYear{2021}

\settopmatter{printacmref=false, printccs=true, printfolios=true}

\makeatletter
\setlength{\parskip}{0ex \@plus 0.2ex \@minus 0.1ex}
\renewcommand\paragraph{\@startsection{paragraph}{4}{\z@}%
  {-.5\baselineskip \@plus -1.5\p@ \@minus -1\p@}%
  {-3.5\p@}%
  {\ACM@NRadjust{\normalfont\normalsize\bfseries\boldmath\@adddotafter}}}
\makeatother

\usepackage{enumitem}
\fi

\ifUSENIX
\documentclass[letterpaper, twocolumn, 10pt]{article}
\usepackage{usenix}
\usepackage{enumitem}
\setlist[itemize]{noitemsep}
\setlist[enumerate]{noitemsep}
\fi

\newif \ifcomments \commentstrue

\usepackage[dvipsnames]{xcolor}

\usepackage{amsmath,amsfonts}
\usepackage{amsthm}
\usepackage{mathrsfs}
\usepackage[]{algorithm2e}
\RestyleAlgo{boxruled}
\usepackage{booktabs}
\LinesNumbered
\usepackage{enumitem}
\usepackage{setspace}

\iffull
\usepackage[style=numeric-comp,backend=bibtex]{biblatex}
\usepackage[pdfauthor={},pdfpagelabels=true,linktocpage=true,hidelinks,linktoc=all,colorlinks]{hyperref}
\usepackage[nameinlink,capitalize,noabbrev]{cleveref}
\hypersetup{linkcolor = {black}, citecolor = {magenta}, urlcolor = {black}}
\else
\usepackage{hyperref}
\fi

\usepackage{xparse}
\usepackage{array,framed}
\usepackage{caption}
\usepackage{subcaption}
\usepackage{tikz,pgfplots,pgfplotstable}
\usetikzlibrary{shapes.geometric,arrows,external,pgfplots.groupplots,matrix}

\usepackage{thmtools}
\usepackage{thm-restate}

\usepackage{mathtools}

\DeclareMathAlphabet{\mathcal}{OMS}{cmsy}{m}{n}

\ifIEEE
\usepackage[scaled=0.88]{helvet} 
\fi

\iffull 
  \newcommand{\fig}{Figure\xspace}
\else
  \ifIEEE
    \newcommand{\fig}{Fig.\xspace} 
  \else
    \newcommand{\fig}{Figure\xspace}
  \fi
\fi

\iffull 
  \newcommand{\figs}{Figures\xspace}
\else
  \ifIEEE
    \newcommand{\figs}{Fig.\xspace} 
  \else
    \newcommand{\figs}{Figures\xspace}
  \fi
\fi

\ifcomments
    \newcommand{\ari}[1]{{\small\textsf{\color{blue}{[Ari: {#1}]}}}}

    \newcommand{\mahimna}[1]{{\small\textsf{\color{violet}{[Mahimna: {#1}]}}}}
    \newcommand{\phil}[1]{{\small\textsf{\color{olive}{[Phil: {#1}]}}}}
    \newcommand{\kushal}[1]{{\small\textsf{\color{cyan}{[Kushal: {#1}]}}}}
    \newcommand{\todo}[1]{{\small\textsf{\color{red}{[ToDo:  {#1}]}}}}
    \newcommand{\assigned}[1]{{\small\textsf{\color{red}{[Assigned:  {#1}]}}}}

\else
    \newcommand{\ari}[1]{}
    \newcommand{\mahimna}[1]{}
    \newcommand{\phil}[1]{}
    \newcommand{\kushal}[1]{}
    \newcommand{\todo}[1]{}
    \newcommand{\assigned}[1]{}
\fi

\iffull
  \newcommand{\mypara}{\paragraph}
\else
  \newcommand{\mypara}[1]{\smallskip\noindent\textbf{#1}\hspace*{0.5em}}
\fi

\newcommand{\players}{{\mathcal{P}}}
\newcommand{\accounts}{{A}}
\newcommand{\contracts}{{\mathcal{C}}}
\newcommand{\account}{\mathsf{acc}}
\newcommand{\accbalance}{\mathsf{balance}}
\newcommand{\accdata}{\mathsf{data}}
\newcommand{\accdatasize}{d}
\newcommand{\calleracc}{\textsf{acc}_{\textnormal{caller}}}
\newcommand{\token}{\mathbf{T}}
\newcommand{\caller}{\textnormal{caller}}

\newcommand{\txset}{{\mathcal{T}}}
\newcommand{\tx}{\mathsf{tx}}
\newcommand{\txaction}{\mathsf{action}}

\newcommand{\block}{B}
\newcommand{\blockaction}{\mathsf{action}}
\newcommand{\validblocks}{\mathsf{validBlocks}}
\newcommand{\blocknum}{\mathsf{num}}

\newcommand{\statemaps}{{\mathcal{S}}}
\newcommand{\MEV}{\textnormal{MEV}}
\newcommand{\WMEV}{\textnormal{WMEV}}
\newcommand{\EV}{\textnormal{EV}}

\newcommand{\kMEV}{\ensuremath{k}\textrm{-}\MEV}

\newcommand{\uniswapcon}{C_\textnormal{uniswap}}

\newcommand{\betcontract}{C_\textnormal{pricebet}}
\newcommand{\makercon}{C_\textnormal{maker}}
\newcommand{\makeracc}{\makercon}
\newcommand{\collateral}{\textsf{collateral}}
\newcommand{\debt}{\textsf{debt}}
\newcommand{\liquidationratio}{\textsf{threshold}}
\newcommand{\qty}{\textrm{qty}}

\newcommand{\ETH}{\textnormal{ETH}}
\newcommand{\BBT}{\textnormal{BBT}}

\newcommand{\kcell}{\texttt{k}\xspace}
\newcommand{\scell}{\texttt{S}\xspace}
\newcommand{\bcell}{\texttt{B}\xspace}

\newcommand{\secparam}{\lambda}

\newcommand{\bits}{\{0,1\}}

\newcommand{\poly}{\textnormal{\textsf{poly}}}

\newcommand{\true}{\mathsf{true}}
\newcommand{\false}{\mathsf{false}}

\newcommand{\ifcode}{\textbf{if }}
\newcommand{\thencode}{\textbf{then }}
\newcommand{\elsecode}{\textbf{else }}

\newcommand{\andcode}{\textbf{and }}

\newcommand{\functioncode}{\textbf{function }}


\newcommand{\Z}{{\mathbb{Z}}}

\newcommand{\pluseq}{\mathrel{+}=}
\newcommand{\minuseq}{\mathrel{-}=}

\newcommand{\myind}{\hspace*{1em}}

\newcommand{\stretchval}{1.2}
\newcommand{\fpage}[2]{\begin{center}\framebox{\begin{minipage}{#1\columnwidth} \setstretch{\stretchval} \footnotesize #2 \end{minipage}}\end{center}}
\newtheorem{theorem}{Theorem}
\newtheorem{lemma}[theorem]{Lemma}

\theoremstyle{definition}
\newtheorem{definition}[theorem]{Definition}
\newtheorem{example}{Example}

\newtheorem{remark}{Remark}
\newtheorem{characteristic}{Characteristic}

\iffull
\newcommand{\triplefigwidth}{0.45\textwidth}

\else
\newcommand{\triplefigwidth}{0.68\columnwidth}
\fi

\usepackage{listings}
\usepackage{color}
 
\definecolor{dkgreen}{rgb}{0,0.6,0}
\definecolor{gray}{rgb}{0.5,0.5,0.5}
\definecolor{mauve}{rgb}{0.58,0,0.82}
\definecolor{gray}{rgb}{0.4,0.4,0.4}
\definecolor{darkblue}{rgb}{0.0,0.0,0.6}
\definecolor{lightblue}{rgb}{0.0,0.0,0.9}
\definecolor{cyan}{rgb}{0.0,0.6,0.6}
\definecolor{darkred}{rgb}{0.6,0.0,0.0}

\definecolor{lightgray}{rgb}{0.9,0.9,0.9}

\usepackage{listings}
\lstset{
 escapechar        = `,
  breaklines=true,
  postbreak=\mbox{\textcolor{red}{$\hookrightarrow$}\space},
  basicstyle=\ttfamily\footnotesize\color{black},
  columns=fullflexible,
  showstringspaces=false,
  numbers=left,                   
  numberstyle=\color{gray},  
  stepnumber=1,
  numbersep=5pt,                  
  backgroundcolor=\color{lightgray},      
  showspaces=false,               
  showstringspaces=false,         
  showtabs=false,                 
  frame=none,                   
  rulecolor=\color{black},        
  tabsize=2,                      
  captionpos=b,                   
  breaklines=true,                
  breakatwhitespace=true,        
  title=\lstname,                   
  commentstyle=\color{gray}\upshape
}

\lstdefinelanguage{XML}
{
  morestring=[s][\color{mauve}]{"}{"},
  morestring=*[s][\color{black}]{>}{<},
  morestring=*[s][\color{purple}]{~>},
  morestring=*[s][\color{orange}]{+Int},
  morestring=*[s][\color{orange}]{-Int},
  morestring=*[s][\color{orange}]{*Int},
  morestring=*[s][\color{orange}]{/Int},
  morestring=*[s][\color{black}]{<}{>},
  morestring=*[s][\color{black}]{</}{>},
  morecomment=[s]{<?}{?>},
  morecomment=[s][\color{dkgreen}]{<!--}{-->},
  stringstyle=\color{black},
  identifierstyle=\color{black},
  keywordstyle=\color{red},
  morekeywords={in,var,exec,gets,xmlns,xsi,noNamespaceSchemaLocation,type,id,x,y,source,target,version,tool,transRef,roleRef,objective,eventually}
}
\iffull \bibliography{references} \fi

\ifIEEE
\setitemize{leftmargin=*}
\fi

\makeatletter
\def\blfootnote{\xdef\@thefnmark{}\@footnotetext}
\makeatother

\begin{document}
\blfootnote{A preliminary version of this paper appears in the proceedings of IEEE S\&P 2023. This is the full version.\\}

\title{Clockwork Finance: Automated Analysis of Economic Security in Smart Contracts}

\iffull
\author{Kushal Babel\thanks{The first three authors contributed equally to this work.} \and Philip Daian\footnotemark[1] \and Mahimna Kelkar\footnotemark[1] \and Ari Juels}
\date{
{Cornell Tech, Cornell University, and IC3} \\
}
\fi

\ifIEEE
\IEEEoverridecommandlockouts
\author{
\IEEEauthorblockN{Kushal Babel\authorrefmark{1}\thanks{\authorrefmark{1}The first three authors contributed equally to this work.}}
\IEEEauthorblockA{Cornell Tech\\
\href{mailto:babel@cs.cornell.edu}{babel@cs.cornell.edu}}
\and
\IEEEauthorblockN{Philip Daian\authorrefmark{1}}
\IEEEauthorblockA{Cornell Tech\\
\href{mailto:phil@cs.cornell.edu}{phil@cs.cornell.edu}}
\and
\IEEEauthorblockN{Mahimna Kelkar\authorrefmark{1}}
\IEEEauthorblockA{Cornell Tech\\
\href{mailto:mahimna@cs.cornell.edu}{mahimna@cs.cornell.edu}}
\and
\IEEEauthorblockN{Ari Juels}
\IEEEauthorblockA{Cornell Tech\\
\href{mailto:juels@cornell.edu}{juels@cornell.edu}}
}
\fi

\ifCCS \begin{abstract}
We introduce the {\em Clockwork Finance Framework} (CFF), a general purpose, formal verification framework for mechanized reasoning about the economic security properties of  {\em composed decentralized-finance (DeFi) smart contracts}.  


CFF features three key properties.  It is \emph{contract complete}, meaning that it can model any smart contract platform and all its contracts---Turing complete or otherwise. It does so with asymptotically \emph{constant model overhead}. It is also \emph{attack-exhaustive by construction}, meaning that it can automatically and mechanically extract all possible economic attacks on users' cryptocurrency across modeled contracts. 

Thanks to these properties, CFF can support multiple goals: economic security analysis of contracts by developers, analysis of DeFi trading risks by users, fees UX, and optimization of arbitrage opportunities by bots or miners. Because CFF offers composability, it can support these goals with reasoning over any desired set of potentially interacting smart contract models.

We instantiate CFF as an executable model for Ethereum contracts that incorporates a state-of-the-art deductive verifier. Building on previous work, we introduce {\em extractable value} (EV), a new formal notion of economic security in composed DeFi contracts that is both a basis for CFF and of general interest. 

We construct modular, human-readable, composable CFF models of four popular, deployed DeFi protocols in Ethereum: Uniswap, Uniswap V2, Sushiswap, and MakerDAO, representing a combined 24 billion USD in value as of March 2022. We use these models along with some other common models such as flash loans, airdrops and voting to show experimentally that CFF is practical and can drive useful, data-based EV-based insights from real world transaction activity. {\em Without any explicitly programmed attack strategies}, CFF uncovers on average an expected \$56 million of EV per month in the recent past.

\end{abstract}
 \maketitle
\else \maketitle  
 \fi

\iffull
\newpage\tableofcontents\newpage
\hypersetup{linkcolor = {cyan}, citecolor = {magenta}, urlcolor = {black}}
\fi


\section{Introduction}
\label{sec:introduction}

\noindent The innovation of smart contracts has resulted in an explosion of decentralized applications on blockchains. Abstractly, smart contracts are pieces of code that run on blockchain platforms, such as Ethereum. They support rich (even Turing-complete) semantics, can trade in the underlying cryptocurrency, and can directly manipulate blockchain state.
While early blockchains were built primarily to support currency transfer, newer ones with smart contracts have enabled a wide range of sophisticated and novel decentralized applications. 

One particularly exciting area where smart contracts have been influential is decentralized finance (or \textit{DeFi}), a general term for financial instruments built on top of public decentralized blockchains. DeFi contracts have realized a number of financial mechanisms and instruments (e.g., automated market makers~\cite{buterin2017path}, atomic swaps~\cite{van2019specification}, and flash loans~\cite{qin2020attacking}) that cannot be replicated with fiat or real world assets, and have no analog in traditional financial systems. These innovations usually take advantage of two distinctive properties of smart contracts. These are \textit{atomicity}, which means (potential) execution of multi-step transactions in an all-or-nothing manner, and \textit{determinism}, meaning execution of state transitions without randomness and thus a unique transaction outcome for a given blockchain state. Smart contracts can also intercommunicate on-chain, which has led to DeFi instruments that can interoperate and {\em compose} to achieve functionality that transcends their independent functionalities.  

Recent years, however, have seen a plethora of high-profile attacks on DeFi contracts (see, e.g., \cite{chen2020survey} for a recent survey), with attackers stealing billions in the aggregate.  These attacks are primarily financial in nature and not pure software exploits; they leverage complex financial interactions among multiple DeFi contracts whose composition is poorly understood. Existing notions of software security and traditional bug-finding tools are insufficient to reason about or discover such attacks. 

A range of literature~\cite{hildenbrandt2018kevm, hirai2017defining}, has attempted to apply formal verification techniques to the study of DeFi security. These works, though, have typically been used to check for attack heuristics~\cite{zhou2020high} that represent conventional software bugs in smart contracts or to validate formal security properties~\cite{park2020end, krupp2018teether} akin to those in standard software verification tools. 
More recently, some work~\cite{zhou2021just} has applied formal verification tools to the economic security of DeFi contracts, quantifying such security by identifying optimum arbitrage strategies. While an important initial step, this work has focused on predetermined, known attack strategies, and lacks the generality to discover new economic attacks, rule out classes of attacks, or provide upper bounds on the exploitable value of DeFi contracts.

\mypara{Clockwork Finance.}
Motivated by the limited formal exploration of the question of DeFi contracts' economic security, in this paper we present \textit{Clockwork Finance\footnote{Our name comes from the Enlightenment notion of the cosmos as a clock, i.e., a fully deterministic and predictable machine, like the smart contract systems we consider. The Wikipedia definition of {\em clockwork universe}~\cite{clockwork-wiki} notes: ``In the history of science, the clockwork universe compares the universe to a mechanical clock ... making every aspect of the machine predictable.''} (CF)}, an approach to understanding the economic security properties of DeFi smart contracts and their composition. CF addresses the inherently economic nature of DeFi security properties by codifying the use of formal verification techniques to reason about the \textit{profit} extractable from the system by a participant, rather than in terms of more traditional descriptions of software bugs as error states. CF relies on, and we introduce in this paper, the first formal definition for the economic security of composed smart contracts, which we call {\em extractable value} (EV). EV generalizes {\em miner-extractable value} (MEV)---a metric defined in~\cite{daian2020flashboys} to study DeFi protocol impact on consensus security.\footnote{CF can be extended to other metrics of economic security, e.g.,  arbitrageurs' profits, profits of permissioned actors, etc., but we leave extensions to future work.} 

\mypara{Clockwork Finance Framework (CFF).}
We realize CF in the form of a powerful mechanized tool that we call the {\em Clockwork Finance Framework (CFF)}. To use CFF, a user wishing to analyze the economic security of a contract creates or reuses an existing formal model of the contract, as well as models for potentially composed contracts. CFF, together with the models we provide, offers three key functional properties:

\begin{itemize}
\item \underline{\em Contract completeness:}   CFF is \emph{contract complete} in the sense that it can model DeFi (and other) contracts, such as those in Ethereum, with equivalent execution complexity to the native platform.  That is, for all possible transactions (inputs), executing the formal CFF model of a contract requires time $\mathcal{O}(1)$ overhead over EVM/native execution time. CFF introduces no execution blow-up or time penalty for the execution of any transaction sequence, even for complex compositions of contracts. CFF also has \textbf{equal expressive power} as the contract platform to which it's applied---again, such as Ethereum.

\item \underline{\em Constant model overhead:} The models we provide feature at most a (small) constant-size increase in the size (number of distinct semantic paths) of the model compared to the target contract. Oftentimes, with path pruning, our specialized models are even substantially smaller than the smart contract code being modeled. We provide a general approach for achieving this property for new CFF models. We discuss this approach and property in detail in Section~\ref{subsec:equivalence}.


\item \underline{\em Attack-exhaustive by construction:}  CFF can mechanically reason about the full space of possible state transitions for the given set of transactions and models. CFF can in principle---given sufficient computation---identify any attack expressible in our definitions as a condition of mempool transaction activity and target contract models. We ensure this by making sure our provided models are \emph{over-approximations} of the studied contracts, yielding false positives in the attack search as a trade-off for efficiency, but not false negatives. We then prune these false positives through concrete validation. We discuss this property in detail along with sources of unsoundness in Section~\ref{subsec:design_impl}.
\end{itemize}

\noindent CFF also offers two important usability features:

\begin{itemize}

\item \underline{\em Modularity:} CFF models are {\em modular}, meaning that once a model is realized for a particular contract, it can be used for any CFF execution involving that contract. Modularity also means that models are arbitrarily composable in CFF: any and all models in a library can be invoked for a CFF analysis without customization. 

\item \underline{\em Human-readability:} Although we do not show this experimentally, we show by example that CFF models are typically easier for human users to read, understand, and reason about than contract source code.
\end{itemize}

Taken together, these properties and features make CFF  highly versatile and able to support a range of different uses. Designers of DeFi contracts can use CFF to reason about the economic security of their contracts and do so, critically, while reasoning about interactions with other contracts. Arbitrage bots and miners can use the same contract models to find profitable strategies in real-time. Users can use CFF to reason about guarantees provided by the transactions they execute in the network, including the value at risk of exploits by miners, bots, and other network participants---which today is considerable in practice~\cite{daian2020flashboys,zhou2020high}. With the rise of frontrunning-as-a-service~\cite{flashbots}, users can also use CFF to set the right fees for their transactions, which taken together with the value extractable from their transactions determines inclusion in the block. We explore these various use cases in the paper.

CFF achieves more than mere measurement of economic security: It can {\em prove bounds} on the economic security of contracts, i.e., the maximum amount adversaries can extract from them. Furthermore, it can do so using only the formally specified models of interacting contracts. CFF {\em does not require manual coding of adversarial strategies}. 

Notably, this means that CFF can illuminate potential adversarial strategies even when they were \textit{not previously exploited} in the wild. This stands in contrast to existing work, where the focus has often been on specific predefined strategies encoded manually~\cite{zhou2020high}, or which has required error-prone effort to define an action-space manually beyond the mere contract code executing on the system~\cite{zhou2021just}. We believe that use of CFF would be a helpful part of the standard security assessment process for smart contracts, alongside bug finding, auditing, and conventional formal verification.

\iffull
\mypara{Contributions.}

We summarize three concrete contributions and insights from our paper below:

\begin{itemize}[itemsep=3pt]
    \item \textbf{Security Definitions} (Sections~\ref{sec:model} and~\ref{sec:composability}).
    We provide the first formal definitions for the economic security of smart contracts and their composition and thus the first principled basis for DeFi contract designers to reason about the economic security of their protocols. Our definitions are general enough to model different types of players with different capabilities (e.g., transaction reordering, censorship, inserting malicious transactions) for influencing the system state.

    \item \textbf{Clockwork Finance Framework (CFF) and Concrete Models} (Sections~\ref{sec:model} and~\ref{sec:kmodel}). We instantiate our definitions in our CFF tool in order to find arbitrage strategies and prove bounds on the economic security of smart contracts. We model within CFF and analyze four popular real-world contracts: Uniswap V1, Uniswap V2, SushiSwap, and MakerDAO.  We compare our results again direct on-chain Ethereum Virtual Machine execution, showing that CFF execution of our models yields high-fidelity results.

    \item \textbf{Practical Attacks and Formal Proofs} (Section~\ref{sec:experiments}).
    Our CFF tool automatically discovers the main attack patterns seen in practice, uncovering highly profitable attacks in an automated way for the four contracts we model. These attacks exploit the price slippage or the lack of secure financial composition of DeFi contracts, and can be used by malicious miners (or others) to profit at the expense of ordinary users. Our tool also yields formal mathematical proofs for the upper bound on the value extractable from these attacks. By our conservative estimate, the potential impact of these attacks frequently exceeds the Ethereum block reward by two orders of magnitude (i.e., 10,000\%). We also validate our attacks by simulating them on an archive node and have contributed the implementation of our simulation method into the latest public release of the Erigon client software.

\end{itemize}

\vspace{2mm}
\fi

CFF is the first smart-contract analytics tool to achieve contract completeness, constant model overhead, and attack-exhaustiveness by construction, enabling it to bring new capabilities to ecosystem participants.  
\iffull
Complete CFF code is available at \url{https://github.com/defi-formal/cff/}.
\else
Complete CFF code is available at \url{https://github.com/defi-anon/cff/}.
\fi

\iffull \else
The full version of this paper~\cite{fullversion} includes additional analyses and discussions.
\fi




\iffull

\section{Background and Related Work}
\label{sec:background}


Our work intersects with several well-studied areas which we briefly introduce here as background.

\subsection{Blockchain and Smart Contracts}


Smart contracts are executed in \emph{transactions}, which, like ACID-style database transactions~\cite{vossen1995database}, modify the state of a cryptocurrency system atomically (that is, either the entire transaction executes or no component of the transaction executes). A transaction's output and validity depends on both the system's state and the code being executed, which can read and respond to this state. The state may also include user balances of tokens representing assets or of cryptocurrencies in the underlying system. In the smart contract setting, the primary purpose of the underlying blockchain is to order transactions. The execution of a transaction sequence is then deterministic, and can be computed by all parties. The sequencing of transactions is done by actors known as \emph{miners} (or \emph{validators} or \emph{sequencers}, terms we use interchangeably).

A unique attribute of smart contract transactions that proves critical to decentralized finance is their ability to throw an unrecoverable error, reverting any side-effects of a transaction until that point and converting the transaction into a no-op. This allows actors to execute transactions in smart contracts that are reverted if some operation fails to complete as expected or yield desired profit. 


\iffull
\subsection{Decentralized Finance}
Decentralized Finance, or \emph{DeFi}, is a general term for the ecosystem of financial products and protocols defined by smart contracts running on a blockchain. As of August 2021, the Ethereum DeFi space contains roughly 80bn USD of locked capital in smart contracts~\cite{defipulse}. DeFi protocols or instruments have already been deployed for a wide range of use cases, and allow users to borrow, lend, exchange, or trade assets on a blockchain. Abstractly, a key goal of DeFi is to create composable and modular financial instruments that do not rely on a centralized issuing party. DeFi instruments can thus interoperate programatically without human intervention or complex cooperation among issuing entities. We provide a brief background on the two specific classes of DeFi instruments featured in this work.

\mypara{Lending contracts.}
Some DeFi contracts lend a certain cryptocurrency (such as DAI in the Maker protocol~\cite{makerdao-whitepaper}) to a user, with another user-supplied cryptocurrency (such as ETH) held by the contract as collateral. If the value of the collateral falls below a system-defined threshold, the financial instrument can automatically foreclose on the collateral to repay the loan without the cooperation of the borrower. This automated loan guarantee mitigates risk in a way attractive to lenders. Lending contracts can also underlie ``stablecoin" protocols, which support tokens pegged to real-world currencies such as the U.S. dollar (e.g., as in the Maker protocol).

\mypara{Decentralized exchanges.}
Another example of a DeFi instrument is a \emph{decentralized exchange} (``DEX"). In a DEX, users can trade between different assets that have a digital representation (e.g., on a blockchain). A DEX facilitates the exchange of assets without the risk that one party in the exchange defaults or fails to execute their end of the asset swap. This guarantee protects users from counterparty risk present in traditional exchanges, especially cryptocurrency exchanges, which have often violated users' trust assumptions by absconding with funds~\cite{mcmillan2014inside, moore2013beware} or  incorrectly executing user orders through technical errors and even fraud~\cite{twomey2019fraud}. A special class of DEX called Automated Market Maker (``AMM'') eliminates the need for a counterparty to execute a swap. An AMM (like Uniswap or Sushiswap) maintains reserves of liquidity providers' assets and allows swaps with a user's assets at programatically self adjusting prices.

\fi
\mypara{Miner extractable value.} A notion called {\em MEV}, or \emph{miner-extractable value}, introduced in~\cite{daian2020flashboys}, measures the extent to which miners can capture value from users through strategic placement and/or ordering of transactions. Miners have the power to dictate the inclusion and ordering of mempool transaction in blocks. (Thus MEV is a superset of the front-running/arbitrage profits capturable by ordinary users or bots, because miners have strictly more power.) Previous studies of MEV have performed transaction-level measurements of the outcome of specific strategies (e.g., sandwiching attacks in~\cite{zhou2020high} and pure revenue trade composition in~\cite{daian2020flashboys}). Other work has abstracted away transaction-level dynamics, analyzing DeFi protocols such as AMMs using  statistical modeling and economic agent-based simulation~\cite{angeris2020improved}.

\subsection{Formal Verification Tools}
\label{subsec:formal-verification}

Formal verification is the study of computer programs through mathematical models in well-defined logics. It supports the proof of mathematical claims over the execution of programs, traditionally to reason about program safety and correctness. Formal verification has been applied to traditional financial systems in the past (like~\cite{passmore2017tradfi}) but as noted in Section~\ref{sec:introduction}, DeFi systems have novel properties not present in these older systems. Most formal verification works for smart contracts (such as~\cite{amani2018towards, hirai2017defining, zhou2020high, park2020end, arusoaie2019findel}) do not reason about economic security and hence cannot characterize financial exploits in DeFi (i.e., they are not attack-exhaustive by construction). Recent work~\cite{zhou2021just} has attempted to apply formal verification to find profitable arbitrage strategies but does not provide formal proofs of economic security. Moreover, the tool covers only certain types of manually encoded smart contract actions, so that the tool lacks contract completeness and optimal model sizes. 


Our work aims to establish a clear translation interface between existing program verification tools and the unique security requirements of DeFi. We develop our models in the K Framework~\cite{rosu2017k}, which provides a formal semantics engine for analyzing and proving properties of programs. K allows developers to define models that are \emph{mathematically formal}, \emph{machine-executable}, and \emph{human-readable}.

\begin{figure}
\centering
\includegraphics[scale=0.18]{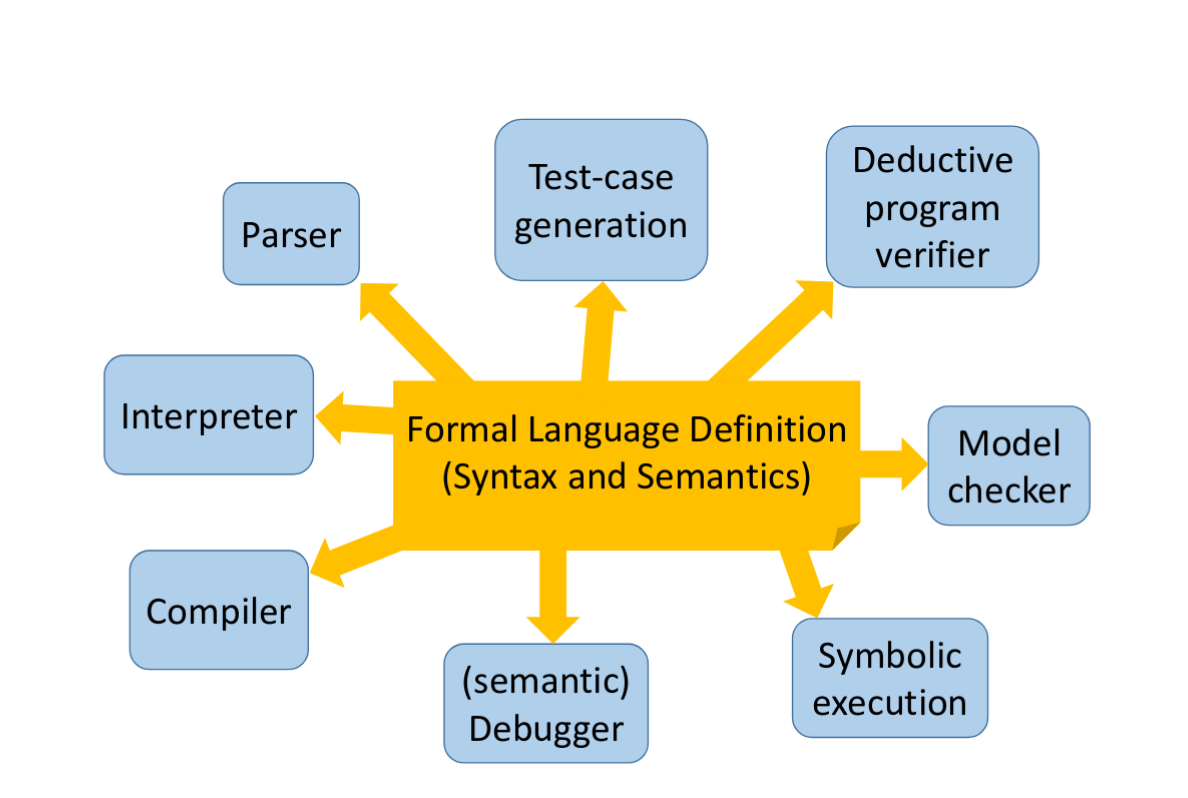}
\caption{K Framework: In this figure from~\cite{rosu2017k}, the yellow box is a user-specified language model (like that in Section~\ref{sec:kmodel}); blue boxes are tools generated automatically by the framework.}
\label{fig:koverview}
\iffull \else \vspace{-2em} \fi
\end{figure}

By {\em mathematically formal}, we mean that K uses an underlying theory called ``matching logic'' that allows claims expressed about programs in programming languages defined by K to be proven formally. Such proofs have been used in industry to verify the practical security properties of smart contracts that hold billions of dollars~\cite{verifiedsc}.


By {\em executable}, we mean that K provides concurrent and non-deterministic rewrite semantics~\cite{chen-rosu-2018-isola} that allow for efficient execution of large programs in the developer-specified programming language model. \fig~\ref{fig:koverview} shows the high-level goals of the K Framework, which include deriving an interpreter and compiler for a specified language semantics, as well as model-checking tools.

By {\em human-readable}, we mean that K provides output in a form that can serve as a reference for other mathematical models, as it uses only abstract and human-readable mathematical operations. Examples of human-readable K semantics include the Jello Paper for the Ethereum VM.\footnote{The ``Jello Paper" (\url{https://jellopaper.org/}), based on~\cite{hildenbrandt2018kevm}, reimplements the original Ethereum yellow paper~\cite{wood2014ethereum} in a machine executable, mathematically formal manner and can generate an Ethereum interpreter and contract proofs.} Because DeFi contracts today lack standardized abstract models, we believe K's abstract models are especially suitable to DeFi and hope they can ease security analysis and specification.

K is one of a number of formal verification tools; other common tools include Coq, Isabelle, etc. Indeed, several have been applied to model Ethereum-based systems in the past~\cite{amani2018towards, hirai2017defining, arusoaie2019findel}. We refer the reader to~\cite{rosu2017k,chen-lucanu-rosu-2020-trb,chen-rosu-2018-isola} for details on the mathematical and formal foundations of K. We emphasize that our MEV-based secure composability definitions and general results are not specific to K.



\fi
\section{Clockwork Finance Formalism}
\label{sec:model}

We introduce our formalism for Clockwork Finance in this section. It underpins the definition of {\em extractable value} (\EV) we introduce in this paper. Our contract composability definitions in Section~\ref{sec:composability} are  based in turn on that of \EV.
We let $\secparam$ throughout denote the system security parameter.

\iffull \else
We assume the reader has general background knowledge on software formal verification outside of cryptocurrencies, and on some basic cryptocurrency and smart contract concepts. We provide further background for readers who do not share this context in Appendix~\ref{sec:background}.
\fi

\mypara{Accounts and balances.}
We use $\accounts$ to denote the space of all possible accounts. For example, in Ethereum, accounts represent public key identifiers and are 160-bit strings (in other words, $\accounts = \bits^{160}$).
We define two functions, $\accbalance$: $\accounts \times \token \to \Z$ and $\accdata$: $\accounts \to \bits^{\accdatasize}$ (where $d$ is $\poly(\secparam)$), that map an account to its current balance (for a given token $\token$) and its associated data (e.g., storage trie in Ethereum) respectively. For $a \in \accounts$, as shorthand, we let $\accbalance(a)$ denote the balance of all tokens held in $a$ and $\accbalance(a)[\token]$ denote the account balance of token $\token$.  We use $\accbalance(a)[0]$ denote the balance of the primary token (e.g., ETH in Ethereum\footnote{We note that our usage of \textit{token} to denote ETH is non-standard. While the ETH balance is stored differently than the balance of other tokens in Ethereum, we choose to model them using the same $\accbalance$ function for a cleaner (although equivalent) formalism.}).


We define the current {\em system state mapping} (or simply {\em state}) $s$ as a combination of the account balance and data; that is, for an account $a$, $s(a) = (\accbalance(a), \accdata(a))$. We use $\statemaps$ to denote the space of all state mappings.

\mypara{Smart contracts.}
As smart contracts in the system are globally accessible, we model them within the global state through the special $\mathsf{0}$ account. We let $\contracts(s)$ denote the set of contracts in state $s$ of the system, which may change as new contracts are added. We use $\accbalance(C,s)$ and $\accdata(C,s)$ to denote the balance of tokens and the data (e.g., contract state and code) associated with a contract $C$ in state $s$ respectively.

\mypara{Transactions.} 
Transactions are polynomial-sized (in the security parameter) strings constructed by some player that are executed by the system and can change the system state. Abstractly, a transaction $\tx$ can be represented by its \textit{action}: a function from $\statemaps$ to $\statemaps \cup \{\bot\}$ transforms the current state mapping into a new state mapping. We denote this action function by $\txaction(\tx)$. We say that a transaction $\tx$ is valid in state $s$ if $\txaction(\tx)(s) \neq \bot$ and use $\txset_s$ to denote the set of all valid transactions for state $s$. Our formalism is general enough to also allow transactions that add smart contracts to the system or interact with existing ones.



\mypara{Blocks.}
We define a block $\block = [\tx_1, \dots, \tx_l]$ to be an ordered list of transactions. We disregard block contents regarding consensus mechanics, e.g., nonce, blockhash, Merkle root which are not relevant for our framework. Of the block metadata, we only model the block number, denoted by $\blocknum(\block)$. The action of a block can now be defined as the result of the action of the sequence of transactions it contains. We use $\blockaction(\block)(s)$ to denote the state resulting from the action of $\block$ on starting state $s$. That is, $\blockaction(\block)(s) = \txaction(\tx_l)(s_{l-1})$ where $s_0 = s$ and $s_i = \txaction(\tx_i)(s_{i-1})$. A block is said to be valid if all of its transactions are valid w.r.t.~their input state (i.e., the state resulting from executing prior transactions sequentially).

We can analogously define the action of any sequence of transactions (even spanning multiple blocks)---a concept useful for analyzing reordering across blocks.   

\mypara{Network actors and mempools.}
Let $\players$ denote the (unbounded) set of players in our system, and $P \in \players$ denote a specific player. 
We use $\txset_s$ to denote the global set of all valid transactions for state $s$, but note that not all transactions can be validly generated by all players.
For a player $P \in \players$, we define a set $\txset_{P,s} \subseteq \txset_s$ as the transactions that can be validly created by $P$ when the system is in state $s$. Transactions created by players are included in a mempool for the current state. A player $P$ working as a miner to create a block may include any transactions currently in the mempool (i.e., transactions generated by other players) as well as any transactions in $\txset_{P,s}$ that $P$ generates itself. Note however, that the miner cannot change the contents of other players' transactions, as they are digitally signed. Abstractly, a ``valid block'' for a miner is any sequence of transactions that the miner has the \textit{ability} to include. We use $\validblocks(P, s)$ to denote the set of all valid blocks that can be created by player $P$ in state $s$ if it could work as a miner. We use $\validblocks_k(P, s)$ to denote the set of valid $k$ length block sequences $(\block_1, \cdots, \block_k)$ such that $\block_1 \in \validblocks(P, s)$ and the other $\block_i \in \validblocks(P, s_{i-1})$ where $s_0 = s$ and $s_j = \blockaction(\block_j, s_{j-1})$.

\mypara{Extractable value.}
Equipped with our basic formalism, we now define extractable value ($\EV$), which intuitively represents the maximum value, expressed in terms of the primary token, that can be extracted by a given player from a valid sequence of blocks that extends the current chain. Formally, for a state $s$, and a set $\mathcal{B}$ of valid block sequences of length $k$, the $\EV$ for a player $P$ with a set of accounts $A_P$ is given by:
\begin{align*}
 \EV(P, \mathcal{B}, s) = \max_{(\block_1, \dots, \block_k) \in \mathcal{B}} \left\{ \sum_{a \in A_P}
 \begin{array}{l}
 \accbalance_k(a)[0] \\
 - \accbalance_0(a)[0]
 \end{array}
 \right\}.
\end{align*}
where $s_0 = s = (\accbalance_0, \accdata_0)$, $s_i = \blockaction(\block_i)(s_{i-1})$, and $s_k = (\accbalance_k, \accdata_k)$.



We also define miner-extractable value, which computes the maximum value that a \textit{miner} can extract in a state $s$. Consider a player $P$ working as a miner.
The $\kMEV$ of $P$ in state $s$ can now be defined as:
\[
    \kMEV(P,s) = \EV(P, \validblocks_k(P,s), s).
\]
Note that the parameter $k$ is the length by which the chain at state $s$ is extended (including through a chain-reorg) by $P$. The most common scenario will be extension by a single block for which we use will simply use $\MEV$ as shorthand henceforth. $\kMEV$ does not account for how difficult it is for $P$ to mine the $k$ consecutive blocks, but it is sufficient for our purpose to understand the value that can be extracted if a single miner could append multiple consecutive blocks. In Appendix~\ref{appendix:GMEV}, we define a weighted notion of miner-extractable-value that takes the probability of appending multiple blocks into account. We call this ``weighted MEV'' or $\WMEV$.

\begin{remark}[Local vs global maximization]
The astute reader may notice that our definitions (along with our concrete CFF instantiation in Section~\ref{sec:kmodel}) only considers the maximum value extractable in some \textit{given state} $s$. This can be considered analogous to finding a ``local maximum'' in the search space, leaving open the possibility that a non-optimal EV computation in the current state may lead to a higher combined EV when future states are also considered. 

As a simple example, consider a transaction $\tx$ that gives a specific miner $P$ a profit of $1~\ETH$ if it is mined when a contract $C$ has state $c_1$ and $10~\ETH$ when the contract has state $c_2$. Assume that the state change from $c_1$ to $c_2$ can only be caused (irreversibly) by a different player $P'$. Now, if $P$ mines a block when $C$ has state $c_1$, local MEV maximization would say that it should include $\tx$ within its block. But if $P'$ later causes the state change to $c_2$ in a new transaction, then $P$ would have made $9~\ETH$ more if it waited to include $\tx$.


While it is theoretically possible to define a ``global maximum'' for EV, computing it requires knowing the probability distribution of future transactions, i.e., how new transactions will be created and ordered within blocks (including by other players). In other words, it requires perfect knowledge of the strategy of all other players in the system, which is unrealistic.

We therefore focus in this work only on the maximum EV for a particular state. We emphasize however, that our definition is exact w.r.t. this local value.
\label{remark:locglob}
\end{remark}

\begin{remark}[MEV subsumes other attacks]
\label{remark:mevsubsume}
We highlight that our notion of MEV subsumes not only arbitrage but \textit{all attacks that can be carried out based on the current state of the system by a profit-seeking player}. Notably, this includes not only common strategies such as frontrunning, backrunning, and sandwich attacks~\cite{zhou2020high}, but also attacks with significant complexity observed in the wild, such as~\cite{pancakebunnyattack,arrayfinanceattack}. 

A common theme within these complex attacks in particular has been to use flash loans to borrow a significant amount of some token(s) and use this capital to extract profit by violating an implicit assumption in another contract (e.g., the valuation of a pool or token), before returning the loan. Such attacks can be explored from the current state without requiring additional state changes from other players, thereby allowing for our local computation of extractable value. We further note that since a miner is in a strictly more privileged position than any other permissionless player in the system, these strategies are exploitable by a miner.  Moreover, in any competitive race to extract these opportunities, the miner will ultimately have the option to capture the resulting revenue. This provides intuition for why MEV is more general than arbitrage or attacks.

We include a concrete example of such a flash-loan based attack within CFF in Section~\ref{sec:other-attacks}.

Since we focus on economic security, we consider only profit-seeking players and our definition of MEV therefore does not capture attacks that exploit a vulnerability but do not necessarily result in financial gain. Such attacks are considered traditional exploits, not economic ones.

\end{remark}


\subsection{Decentralized Finance Instruments}
\label{subsec:defi_instruments}



\mypara{DeFi instruments.}
We define DeFi instruments quite broadly, as smart contracts that interact with tokens in some way other than through transaction fees. We provide three concrete examples of DeFi instruments, which we use in running examples throughout the paper and as building blocks to discuss properties at higher levels of abstraction.

In particular, we specify here: (1) A simplified Uniswap contract; (2) A simplified Maker contract; and (3) A simple betting contract. We note that while we use simplified versions of the original contracts, they are still useful as didactic tools and for analyzing the core semantic properties underlying contract composition. {\em Note, however, that our instantiations of the contracts in the CFF (see Section~\ref{sec:kmodel}) include the missing details, i.e., are complete and usable for real-world data.} 

\mypara{Uniswap contract.}
\label{subsec:uniswap}
The Uniswap automated market maker contract~\cite{adams2019uniswap} allows a player to execute exchanges between two tokens (usually ETH and another token), according to a market-driven exchange rate. The contract assumes the role of the counterparty for such an exchange.  Uniswap uses an automated market maker formula, called the $x \times y = k$ formula or the \textit{constant product} formula. We discuss a simplified version here that does not deal with liquidity provisions, transaction fees, and rounding. Abstractly, for tokens $\mathbf{X}$ and $\mathbf{Y}$, the number of coins $x$ and $y$ for these tokens in the contract always satisfies the invariant $x \times y = k$, where $k$ is a constant. This equation can be used to determine the exchange rate between $\mathbf{X}$ and $\mathbf{Y}$. If $\Delta x$ coins of $\mathbf{X}$ are sold (to the contract), $\Delta y$ coins of $\mathbf{Y}$ will be received (by the user) so as to satisfy:
\[
x \times y = (x + \Delta x) \times (y - \Delta y).
\]
\fig~\ref{fig:uniswap_contract} \iffull\else(Appendix~\ref{sec:moreuni}) \fi shows the pseudocode for our simplified Uniswap contract $\uniswapcon^{(\mathbf{X}, \mathbf{Y})}$ for the tokens $\mathbf{X}$ and $\mathbf{Y}$. It contains a function $\texttt{exchange()}$ which allows a user to sell $\textrm{InAmount}$ tokens of $\textrm{InToken}$ to the contract in exchange for $\textrm{OutToken}$ tokens where $(\textrm{InToken}, \textrm{OutToken}) \in \{(\mathbf{X}, \mathbf{Y}), (\mathbf{Y}, \mathbf{X})\}$. The number of $\textrm{OutToken}$ tokens received by the user is given by the $x \times y = k$ market maker formula.

\iffull
\begin{figure}[!t]
\fpage{0.9}{
    \begin{center}
        \textbf{Contract $\uniswapcon^{(\mathbf{X}, \mathbf{Y})}$}
    \end{center}
    
     $\functioncode \texttt{exchange(\textrm{InToken, OutToken, InAmount}):}$ \\
    \myind \ifcode $\accbalance(\calleracc)[\textrm{InToken}] \geq \textrm{InAmount}$ \thencode \\
    \myind \myind $x = \accbalance(\uniswapcon)[\textrm{InToken}]$ \\
    \myind \myind $y = \accbalance(\uniswapcon) [\textrm{OutToken}]$\\
    \myind \myind $\textrm{OutAmount} = y - {xy}/{(x + \textrm{InAmount})}$\\
    \myind \myind $\accbalance(\calleracc)[\textrm{InToken}] \minuseq \textrm{InAmount}$\\
    \myind \myind $\accbalance(\calleracc)[\textrm{OutToken}] \pluseq \textrm{OutAmount}$\\
    \myind \myind $\accbalance(\uniswapcon)[\textrm{InToken}] \pluseq \textrm{InAmount}$\\
    \myind \myind $\accbalance(\uniswapcon)[\textrm{OutToken}] \minuseq \textrm{OutAmount}$
    
    \myind \elsecode Output $\bot$
    
}
\caption{Simplified abstract Uniswap contract}
\label{fig:uniswap_contract}
\end{figure}

\fi

\mypara{Maker contract.} 
\iffull
The Maker protocol allows users to generate and redeem the collateral-backed ``stablecoin" Dai through Collateralized Debt Positions (CDPs). Users can take out a loan in Dai by depositing the required amount of an approved cryptocurrency (e.g., \ETH) as collateral, and can pay back the loan in Dai to free up their collateral. If a user's collateral value relative to their debt falls below a certain threshold called the ``Liquidation Ratio" ($>1$), then their collateral is auctioned off to other users in order to close the debt position. Maker uses a set of external feeds as price oracles to determine the value of the collateral. A separate governance mechanism is used to determine parameters like the Liquidation Ratio, stability fees (interest charged for the loan), etc., and also to approve external price oracle feeds and valid collateral types. We consider here a simplified version of Maker's single-collateral CDP contract that does not model stability fees, or liquidation penalties. The contract $\makercon^{(\mathbf{X}, \mathbf{Y})}$ allows users to take out (or pay back) loans denominated in token $\mathbf{X}$ by depositing (or withdrawing) the appropriate collateral in token $\mathbf{Y}$, and allows for liquidation as soon as the debt-to-collateral ratio drops below the Liquidation Ratio. The contract is detailed in \fig~\ref{fig:maker_contract}.

It should be noted that the amount of collateral liquidated and received by the liquidator as well as the debt (in Dai) paid off by the liquidator in exchange for the collateral depends on the outcome of a 2-phase auction. If the auction is perfectly \emph{efficient}, the winning bidder pays off an equivalent amount of debt for receiving the offered collateral. On the other hand, when the auction is inefficient due to system congestion, collusion, transaction censoring, etc., the winning bidder can receive the entire collateral on offer without paying off an equivalent amount of debt. In our simplified Contract $\makercon^{(\mathbf{{X}, \mathbf{Y}})}$, we assume that liquidation is perfectly efficient.
\begin{figure}[!t]
\fpage{0.9}{
    \begin{center}
        \textbf{Contract $\makercon^{(\mathbf{{X}, \mathbf{Y}})}$}
    \end{center}
    
    $\liquidationratio = 1.5; \collateral = \{\}; \debt = \{\};$ \\
    
    $\functioncode \texttt{deposit\_collateral}(\qty):$\\
    \myind \ifcode $\accbalance(\calleracc)[Y] \geq \qty$ \thencode \\
    \myind \myind $\accbalance(\calleracc)[Y] \minuseq \qty$\\
    \myind \myind $\accbalance(\makeracc)[Y] \pluseq \qty$\\
    \myind \myind $ \collateral[\caller] \pluseq \qty$ \\
    
    $\functioncode \texttt{deposit\_loan}(\qty):$\\
    \myind \ifcode $\accbalance(\calleracc)[X] \geq \qty$ \andcode $\debt[\caller] \geq \qty$ \thencode \\
    \myind \myind $\accbalance(\calleracc)[X] \minuseq \qty$\\
    \myind \myind $ \debt[\caller] \minuseq \qty$ \\

    $\functioncode \texttt{withdraw\_collateral}(\qty):$\\
    \myind \ifcode $\collateral[\caller] \geq \qty$ \andcode  $\texttt{getprice}(Y, X) * (\collateral[\caller] - \qty) - \liquidationratio * \debt[\caller] \geq 0$\xspace\thencode \\
    \myind \myind $\accbalance(\calleracc)[Y] \pluseq \qty$\\
    \myind \myind $\accbalance(\makeracc)[Y] \minuseq \qty$\\
    \myind \myind $ \collateral[\caller] \minuseq \qty$ \\
    
    $\functioncode \texttt{withdraw\_loan}(\qty):$\\
    \myind \ifcode $\texttt{getprice}(Y, X) * \collateral[\caller] - \liquidationratio * (\debt[\caller] + \qty) \geq 0$ \thencode \\
    \myind \myind $\accbalance(\calleracc)[X] \pluseq \qty$\\
    \myind \myind $ \debt[\caller] \pluseq \qty$ \\

    $\functioncode \texttt{liquidate}(acc):$\\
    \myind \ifcode $\texttt{getprice}(Y, X) * \collateral[acc] - \liquidationratio * \debt[acc] < 0$ \thencode \\
    \myind \myind $\accbalance(\calleracc)[X] \minuseq \debt[acc]$\\
    \myind \myind $\accbalance(\calleracc)[Y] \pluseq \debt[acc] / \texttt{getprice}(Y, X)$\\
    \myind \myind $\accbalance(\makeracc)[Y] \minuseq \debt[acc] / \texttt{getprice}(Y, X)$\\
    \myind \myind $\debt(acc) = 0$\\
    \myind \myind $\collateral[acc] \minuseq \debt[acc] / \texttt{getprice}(Y, X)$\\
    
    $\functioncode \texttt{getprice}(Y, X):$\\
    \myind return $\frac{\accbalance(\uniswapcon)[X]}{\accbalance(\uniswapcon)[Y]}$ \\
}
\caption{Maker contract}
\label{fig:maker_contract}
\iffull \else \vspace{-2em} \fi
\end{figure}

\else 
We also model Maker, a popular DeFi protocol. The model is described in Appendix~\ref{subsec:maker}.
\fi



\mypara{Betting contract.}
To better understand composition failures, we introduce a simple betting contract and study its interaction with the previous contracts. Abstractly, the betting contract allows a user to place a bet against the contract on a future token exchange rate as determined by using Uniswap as a price oracle. By price oracle, we mean that the exchange rate between tokens as determined by the Uniswap contract is used to drive decisions in another contract. 

In \fig~\ref{fig:betting_contract}, we specify the contract $\betcontract^{\mathbf{X}}$ that takes bets on the relative future price of token $\textbf{X}$ to $\ETH$. Specifically, suppose that $\betcontract^{\mathbf{X}}$ is initialized with a deposit of 100 $\ETH$ tokens. A user Alice can now call $\texttt{bet()}$ and deposit 100 of her own $\ETH$ tokens to take a position against the contract. If at some point before the expiration time $t$, the Uniswap contract $\uniswapcon^{(\textbf{X},\ETH)}$ contains more $\ETH$ tokens than $\textbf{X}$ tokens, (i.e., the Uniswap contract values \textbf{X} more than $\ETH$), Alice can call $\texttt{getreward()}$ to claim 200 $\ETH$ from the contract, which includes her initial 100 $\ETH$ bet, along with her 100 $\ETH$ reward. Otherwise, Alice loses her initial bet.

For simplicity, our contract only contains a single bet, but it is straightforward to design similar contracts with more restrictions and/or functionalities (e.g., allowing another user to play the counterparty in the bet).



\begin{figure}

\fpage{0.9}{
    \begin{center}
        \textbf{Contract $\betcontract^{\mathbf{X}}$}
    \end{center}
    
    $\textsf{hasBet} = \false; \textsf{player} = \bot$ \\
    \texttt{// Contract also initialized with 100 ETH tokens when created.} \\

    $\functioncode \texttt{bet():}$\\
    \myind \ifcode $(\textsf{hasBet} = \false)$ \andcode $\accbalance(\account_{\textnormal{caller}})[\ETH] \geq 100$ \thencode \\
    \myind \myind $\accbalance(\account_{\textnormal{caller}})[\ETH] \minuseq 100$\\
    \myind \myind $\accbalance(\betcontract)[\ETH] \pluseq 100$ \\
    \myind \myind $\textsf{hasBet} = \true; \textsf{player} = \textnormal{caller}$ \\
    \myind \elsecode Output $\bot$\\
    
    $\functioncode \texttt{getreward():}$\\
    \myind \ifcode $(\textsf{hasBet} = \true)$ \andcode $\frac{\accbalance(\uniswapcon^{(\textbf{X},\ETH)})[\ETH]}{\accbalance(\uniswapcon^{(\textbf{X},\ETH)})[\textbf{X}]} > 1$ \andcode $(\textsf{player} = \textnormal{caller})$ \andcode (current time is at most $t$) \thencode \\
    \myind \myind $\accbalance(\account_{\textnormal{caller}})[\ETH] \pluseq 200$ \\
    \myind \myind $\accbalance(\betcontract)[\ETH] \minuseq 200$ \\
    \myind \elsecode Output $\bot$
}
\caption{Betting Contract $\betcontract$}
\iffull \else \vspace*{-2em} \fi
\label{fig:betting_contract}
\end{figure}


\section{DeFi Composability}
\label{sec:composability}



Smart contracts don't exist in isolation. A natural question, therefore, is when contracts ``compose securely.'' Abstractly, for a particular notion of security, does the security of a contract $C_1$ change when another contract $C_2$ is added to the system? In this paper, since our primary motivation is to analyze DeFi instruments, we focus on an economic notion of composable security. In particular, we look at how the extractable value of the system changes when new contracts are added to it. The \textit{economic} composability of an existing DeFi instrument $C_1$ w.r.t. $C_2$ now pertains to the added monetary value that can be extracted if $C_2$ is introduced into the system. That is, $C_1$ is composable w.r.t. $C_2$ if adding $C_2$ to the system does not give an adversary significantly higher extraction gains. For brevity, throughout this paper, we let {\em composability} refer to this specific notion, but note that it is orthogonal to previously considered notions (in, e.g.,~\cite{kosba2016-hawk,liao2019-ilc}).

Ideally, we want contracts to be ``robust'' enough to compose securely with all other contracts. Unfortunately, this may be too strong a notion in practice. We thus parameterize our definitions to allow  restricted or partial composability. Definition~\ref{def:single_defi_compose} defines the simplest notion of contract composability. 

\begin{definition}[Defi Composability]
Consider state $s$ and player $P$. A DeFi instrument $C'$ is $\varepsilon$-composable under $(P,s)$ if 
\[
        \MEV(P,s') \leq (1+\varepsilon)\;\MEV(P,s).
    \]
Here $s'$ is the state resulting from executing a transaction that adds the contract $C'$ to $s$ (no-op if $C'$ already exists). Although the composability of $C'$ pertains to all contracts in $\contracts(s)$, when looking at the specific interaction with a $C \in \contracts(s)$, we may also write that $C'$ is $\varepsilon$-composable with $(C,P,s)$. 
\label{def:single_defi_compose}
\end{definition}

In other words, allowing a player to interact with contract $C'$ in a limited capacity (using at most the tokens that the player controls in $s$) does not significantly increase the profit the player can extract form the system. Note that Definition~\ref{def:single_defi_compose} can easily be extended to consider several states and or players.



\subsection{Characteristics of Contract Composition}
We find that DeFi instruments that are secure under composition according to Definition~\ref{def:single_defi_compose} are surprisingly uncommon, especially when two instruments depend on each other (e.g., one contract using the other as a price oracle). Intuitively, manipulating one contract can change the execution path of the other contract. In this section, we analyze the composition among the contracts ($\uniswapcon$, $\betcontract$, and $\makercon$) introduced in Section~\ref{subsec:defi_instruments} to highlight interesting characteristics that can arise from smart contract composition. Note that for this simplified, didactic analysis, we do not make use of our CFF tool. We summarize our observed characteristics below.

\begin{characteristic}
Composability is state dependent---contracts may be $\varepsilon$-composable in state $s$ but not in another state $s'$.
\end{characteristic}

\begin{characteristic}
Composability depends on the actions allowed for a player. For instance, contracts may be composable if only transaction reordering is allowed but not if the creation of new transactions is allowed as well.
\end{characteristic}

\begin{characteristic}
A contract may not be composable with another instance of itself.
\end{characteristic}

\begin{characteristic}
It is often possible to introduce {\em adversarial} contracts that break composability with minimal resources. Thus it is important to consider composability not just of existing contracts, but also over such adversarial contracts.
\end{characteristic}

To provide intuition for these properties, we will analyze the following contract compositions. Section~\ref{subsec:price_oracle} considers the use of $\uniswapcon$ as a price oracle for either $\betcontract$ or $\makercon$. \iffull Section~\ref{subsec:multiple_amm} \else Appendix~\ref{subsec:multiple_amm}\fi analyzes the composition between multiple independent instances of $\uniswapcon$. \iffull Section~~\ref{subsec:mev_bribery}\else Appendix~\ref{subsec:mev_bribery}\fi introduces a new bribery contract that can be used to inject non-composability into the system. 


\subsection{Uniswap as a Price Oracle}
\label{subsec:price_oracle}

\begin{example}[$\uniswapcon$ as a price oracle for $\betcontract$]
Consider a simplified Uniswap contract ($\uniswapcon$) that exchanges the tokens $\BBT$ and $\ETH$, and a betting contract ($\betcontract$) that uses it as a price oracle.

In particular, consider a system state $s$ such that $\contracts(s) = \{\uniswapcon\}$ (or alternatively $\contracts(s)$ contains other contracts that do not affect the composability). Suppose that in state $s$, $\uniswapcon$ contains $b$ $\BBT$ tokens and $e$ $\ETH$ tokens such that $b > e$. To denote the Uniswap transactions contained in the mempool in state $s$: 
\begin{itemize}
    \item Let $\mathcal{T}_{B \rightarrow E}$ be the set of transactions that sell $\BBT$ tokens to the contract in exchange for $\ETH$ tokens. Suppose the total number of $\BBT$ tokens transacted is $b'$.
    \item Let $\mathcal{T}_{E \rightarrow B}$ be the set of transactions that sell $\ETH$ to the contract in exchange for $\BBT$ tokens. Suppose that the total number of $\ETH$ transacted is $e'$.
\end{itemize}
For a player $P$, let $p_e$ and $p_b$ be the number of $\ETH$ and $\BBT$ tokens held by $P$ in the state $s$ that are not within pending transactions in the mempool. Note that $P$ can use transactions from other accounts within the mempool as well as any transactions it can create with its own capital to create a block. Note that even if $P$ does not have the hash power to mine blocks, it can pay some other miner to order transactions according to its preference. Let $s'$ be the state resulting from adding $\betcontract$ to state $s$.
\label{example:uniswap-pricebet}
\end{example}

\mypara{Composability is state dependent.}
It is easy to see that contracts that are independent of each other and provide orthogonal functionalities should compose securely in all states.  In most real-world cases, however, we want to analyze the composability of contracts that are not independent and may in fact depend on each other's state. In such situations, whether two contracts compose securely will almost always depend on the characteristics of the current system state.

We use Example~\ref{example:uniswap-pricebet} to provide intuition to this observation. Specifically, we show that $\uniswapcon$ and $\betcontract$ are composable in states with a small number of available tokens, while in other states, an adversary can extract more MEV from the composition. Suppose that we define the number of \textit{liquid tokens} in the Uniswap contract as follows: For player $P$ and state $s$, we say that there are $l_b = l_b(P,s) = b' + p_b$ liquid $\BBT$ tokens and $l_e = l_e(P,s) = e' + p_e$ liquid $\ETH$ tokens.
We will now show how composability can be affected by the number of liquid tokens in the current state.

\begin{enumerate}[label=\alph*), wide]
    \item \textit{Composability in states with a small number of liquid tokens.} When $l_e \leq b - e$, i.e., the number of liquid tokens is sufficiently small, $\uniswapcon$ and $\betcontract$ do in fact compose securely. This is because regardless of what transactions $P$ creates or how it orders existing transactions in the transaction pool, at no point in the execution of a created block can the number of $\ETH$ tokens in $\uniswapcon$ exceed the number of $\BBT$ tokens in it. 
    In other words, $P$ cannot maliciously create a short term fluctuation in the exchange rate in order to claim a reward from $\betcontract$. Note that while $P$ can still cause the exchange rate to be manipulated even if it cannot cause the number of $\ETH$ tokens to exceed the number of $\BBT$ tokens, since we are focusing only on composability with $\betcontract$ here specifically, $P$ will not be able to claim the reward from $\betcontract$.
    
    Consequently, any value that $P$ can extract in state $s'$ (obtained by adding  $\betcontract$ to state $s$) can also be extracted in state $s$. Equivalently, $\MEV(P,s') = \MEV(P,s)$. We conclude that $\uniswapcon$ is $0$-composable under $(\betcontract, P, s)$.

    \item \textit{Non-composability in other states.} Suppose now that our low liquidity assumption was no longer valid. In particular, we will consider states $s$ such that $e' > b - e$, and $p_e \geq 100$. At least 100 $\ETH$ is necessary in our example to actually take a bet against the betting contract.
    To extract more value in state $s$, a malicious miner $P$ can proceed as follows:
    \begin{enumerate}[label=\arabic*),leftmargin=\parindent, labelindent=\parindent]
        \item Insert a transaction that takes a bet against the contract $\betcontract$ by depositing 100 $\ETH$.
        \item Order all transactions in the set $\mathcal{T}_{E \rightarrow B}$ . This raises the amount of $\ETH$ in  $\uniswapcon$ temporarily.
        \item Insert a transaction (a call to \texttt{getreward()}) to claim the reward of 100 $\ETH$ (in addition to its original bet) from $\betcontract$ due to the short term price fluctuation in $\uniswapcon$.
        \item Order the transactions in $\mathcal{T}_{B \rightarrow E}$  to buy $\ETH$ from $\uniswapcon$.
    \end{enumerate}

Abstractly, by ordering all transactions that sell $\ETH$ to $\uniswapcon$ first, $P$ can create a short-term volatility in the exchange rate between $\ETH$ and $\BBT$, allowing $P$ to claim the reward from $\betcontract$. When the block created by $P$ executes, since all transactions that add $\ETH$ to $\uniswapcon$ are ordered first, there will be more $\ETH$ tokens than $\BBT$ tokens by the time the $P$'s transaction to claim the reward from $\betcontract$ executes. This sudden change in the amount of $\ETH$ is only temporary as the remaining transactions in the block will reduce the number of $\ETH$ tokens. Note that this reordering attack is still possible in the case that $b' \approx e'$ and the natural or ``fair'' transaction order would not cause such a large change in the exchange rate during normal execution.  
Yet, the malicious miner $P$ was able to profit simply by reordering user transactions. 
\end{enumerate}

\mypara{Composability depends on the allowed actions.}
In the context of Example~\ref{example:uniswap-pricebet}, if $P$ cannot insert its own transactions for $\uniswapcon$, then composability holds even if $p_e + e' - 100 > b - e > e'$ and $p_e \geq 100$, since $P$ cannot create a large enough price fluctuation simply from the transactions in the mempool. However, if $P$ has the ability to insert its own transactions, it can use the previously mentioned procedure to extract the reward from $\betcontract$. $P$ can also insert its transactions before and after user transactions to take advantage of the short term slippage in the Uniswap price. This strategy resembles the sandwiching attack described in~\cite{zhou2020high}, which combines frontrunning and backrunning. It also allows $P$ to capitalize on the price differential between limit orders and market orders.

\iffull 
\mypara{Uniswap as a price oracle for Maker.}
Similar problems would arise if Uniswap is used as a price oracle in the Maker contract. By reordering Uniswap transactions, and thereby manipulating the exchange rate, a miner can cause the value of a user's collateral to fall below the acceptable threshold, and trigger a liquidation event. Furthermore, the miner can buy the user's collateral tokens in the liquidation event, and later sell them for a profit when the exchange price returns to normal.
\fi

\iffull
\subsection{Composition of multiple AMMs}
\label{subsec:multiple_amm}
Perhaps surprisingly, we find that even multiple contracts deployed with the same code need not be composable with each other. An interesting example of this non-composability is seen when two automated market makers (AMM) contracts co-exist in a system. Example~\ref{example:multiple_amm} highlights this observation.

\begin{example}
Consider state $s$ containing two instances, $\uniswapcon$ and $\uniswapcon^{*}$, of the Uniswap contract that exchange between the same two tokens ($\BBT$ and $\ETH$). 
Let $b, e$ be the number of $\BBT$ and $\ETH$ tokens respectively in $\uniswapcon$, and let $b^*, e^*$ be the number of $\BBT$ and $\ETH$ tokens respectively in $\uniswapcon^*$.
\label{example:multiple_amm}
\end{example}

\begin{restatable}{lemma}{multipleamm}
If $be^* \neq b^*e$, then there exists a $\delta > 0$ such that for any $0 < \alpha < \delta$, a miner with at least $\alpha$ $\ETH$ (equiv. $\BBT$) tokens can achieve an end balance of more than $\alpha$ $\ETH$ (equiv. $\BBT$) tokens
by only interacting with $\uniswapcon$ and $\uniswapcon^*$.
\end{restatable}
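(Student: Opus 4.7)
The plan is to construct an explicit two-step arbitrage strategy and verify its profitability by elementary algebra on the constant-product formula from Figure~\ref{fig:uniswap_contract}. The hypothesis $be^* \neq b^*e$ says precisely that the two pools have different marginal $\BBT/\ETH$ exchange rates (since $b/e = b^*/e^*$ iff $be^* = b^*e$), so an arbitrage between them should exist. Without loss of generality I take $be^* > b^*e$, meaning $\uniswapcon$ offers marginally more $\BBT$ per $\ETH$ than $\uniswapcon^*$ does; the opposite case is handled symmetrically by swapping the roles of the two contracts. The strategy I will analyze is: first sell $\alpha$ $\ETH$ into $\uniswapcon$, and then immediately sell all of the $\BBT$ just received into $\uniswapcon^*$ for $\ETH$.

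First I would write the miner's final $\ETH$ balance as a closed-form function of $\alpha$ by applying the \texttt{exchange} rule twice. The first trade produces $\Delta b = b\alpha/(e+\alpha)$ of $\BBT$, and plugging that into the same formula for the second trade gives an $\ETH$ payout of
\[
f(\alpha) \;=\; \frac{e^* \Delta b}{b^* + \Delta b} \;=\; \frac{b e^*\,\alpha}{e b^* + (b + b^*)\,\alpha}.
\]
Next I would reduce $f(\alpha) > \alpha$ to the linear inequality $(b+b^*)\alpha < be^* - b^*e$, which is where the hypothesis is used, and read off
\[
\delta \;:=\; \frac{be^* - b^*e}{b + b^*} \;>\; 0,
\]
so that any $0 < \alpha < \delta$ yields a strictly positive profit in $\ETH$. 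The ``equivalently with $\BBT$'' clause follows by running the same argument with the roles of $\ETH$ and $\BBT$ interchanged throughout.

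The calculation itself is routine, so I do not expect any serious obstacle. The one subtlety I would double-check is that both transactions the miner submits are individually valid in the sense required by Figure~\ref{fig:uniswap_contract}, i.e.\ that \texttt{exchange} does not output $\bot$ on either call; this is immediate because the miner only spends $\alpha$ $\ETH$ upfront (which it has by assumption) and then spends exactly the $\Delta b$ $\BBT$ it just received, so the caller-balance precondition holds in both cases. For the symmetric case $be^* < b^*e$, I would simply swap $\uniswapcon$ and $\uniswapcon^*$ in the strategy, which gives $\delta = |be^* - b^*e|/(b+b^*)$ uniformly and shows the analogous claim starting from $\alpha$ $\BBT$.
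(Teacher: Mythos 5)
Your proposal is correct and follows essentially the same route as the paper: the same two-leg round trip (sell $\alpha$ $\ETH$ into one pool, sell the resulting $\BBT$ into the other), the same closed-form payout $\frac{be^*\alpha}{b^*e+(b+b^*)\alpha}$, and the same threshold $\delta=\frac{|be^*-b^*e|}{b+b^*}$ obtained by reducing $f(\alpha)>\alpha$ to a linear inequality, with the two sign cases handled symmetrically. Your added check that neither \texttt{exchange} call outputs $\bot$ is a small point the paper leaves implicit.
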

\begin{proof}
We prove for $\ETH$ tokens but note that the proof is exactly the same for $\BBT$ tokens. Let $U = \{\uniswapcon, \uniswapcon^*\}$. Consider the following sequence of transactions: (1) Deposit $\ETH$ in contract $A \in U$ to retrieve tokens of $\BBT$; (2) Deposit the $\BBT$ tokens in $A' \in U \setminus A$ to get tokens of $\ETH$. We will show that when $be^* \neq b^*e$, there exists a $\delta > 0$ such that depositing $\alpha$ ($0 < \alpha < \delta$) tokens in step (1) results in more than $\alpha$ tokens in step (2).

First, suppose that $\alpha_0$ $\ETH$ tokens are deposited in $\uniswapcon$ in the first step. This results in $\frac{b \alpha_0}{e + \alpha_0}$ $\BBT$ tokens, which when deposited in $\uniswapcon^*$ gives back $\frac{be^*\alpha_0}{b^*e+b^*\alpha_0 + b\alpha_0}$ $\ETH$ tokens. Similarly, if $\alpha_0$ $\ETH$ tokens were first deposited in $\uniswapcon'$, then the user would end up with $\frac{b^*e\alpha_0}{be^* + b\alpha_0 + b^*\alpha_0}$ $\ETH$ tokens. Now, we consider the following cases:

\textbf{Case (1)} $be^* - b^*e > 0$.
Let $\delta = \frac{be^* - b^*e}{b + b^*}$. Therefore, $b^*e + b \alpha + b^* \alpha < be^*$ which gives $\alpha < \frac{be^*\alpha}{b^*e + b \alpha + b^* \alpha}$. In other words, depositing first in $\uniswapcon$ and then in $\uniswapcon^*$ yields more $\ETH$ tokens than the initial deposit.

\textbf{Case (2)} $be^* - b^*e < 0$.
This is analogous to the first case.
Let $\delta = \frac{b^*e - be^*}{b + b^*}$. Therefore, $be^* + b \alpha + b^* \alpha < b^*e$ which gives $\alpha < \frac{b^*e\alpha}{be^* + b \alpha + b^* \alpha}$. In other words, depositing first in $\uniswapcon^*$ and then in $\uniswapcon$ yields more $\ETH$ than the initial deposit.
\end{proof}

\subsection{MEV Bribery Contracts}
\label{subsec:mev_bribery}
New contracts can be introduced into the system specifically with the goal of breaking composability. One such example is that of \textit{bribery contracts}. The existence of MEV in a system can give rise to new bribery-based incentives for miners to choose the final transaction ordering. For instance, a user could bribe a miner to give her transactions preferential treatment (e.g., a better exchange rate for Uniswap transactions). Such bribes can be carried out securely through bribery contracts. Consider the following simple example. 

\begin{example}
A user $U$ and a miner $P$ enter into a bribery smart contract with a payout as follows: $P$ submits two valid transaction orderings, $O_1$ and $O_2$, such that $O_1$ is preferred by $U$; if $O_1$ is the finalized order, $P$ receives a payout proportional to the difference to the user $U$ in value of $O_1$ and $O_2$. 
\end{example}

Intuitively, $U$ is ``bribing'' the miner to provide $U$ with a more profitable transaction ordering. To maximize its profit, a miner may potentially enter into multiple such bribery contracts with other users, and pick the best one to complete. Bribery contracts could also pose a threat to the long term stability of the system; given enough incentive, it could be worthwhile to mine a consensus block on a stale chain, thereby attempting to rewrite blockchain history. This is similar to time-bandit attacks, which as observed in~\cite{daian2020flashboys} can be highly detrimental for current blockchain consensus protocols.

\fi

\subsection{Remarks on Composability}

\iffull \else
We provide some additional exploration of composability and its relationship to bribery and oracles in Appendix~\ref{sec:morecomposability}.
\fi
We end with some remarks on our composition examples.

\mypara{Takeaways for smart contract developers.}
Unfortunately, as our composition examples show, the security of a DeFi smart contract may not always depend solely on the contract's code; design flaws in other contracts---even those deployed much later---may cause composability failures. This is problematic for contract developers since it implies that security of their contracts may in fact be out of their hands.

\mypara{Remark on capital requirements.}
Several of our DeFi composability attacks in this section require the miner to possess some initial capital to carry out malicious transaction reorderings and extract MEV. 
Despite this, we note that in the real world, capital requirements will rarely be barriers to exploiting the system, even for smaller players, particularly due to the availability of flash loans.Flash loans are essentially risk-free loans that can be offered any time arbitrage or other profitable system behavior can be executed atomically, which is often the case. Flash loans also do not compose with contracts that were designed without flash loans; the attacks in~\cite{qin2020attacking} are an example of this. Consequently, adding flash loans to any of our non-composability examples will only exacerbate the impact of malicious transaction reordering.

\section{Clockwork Exploration in K}
\label{sec:kmodel}
{
\color{red}

\begin{figure}[!t]
    \centering
    \includegraphics[scale=0.85]{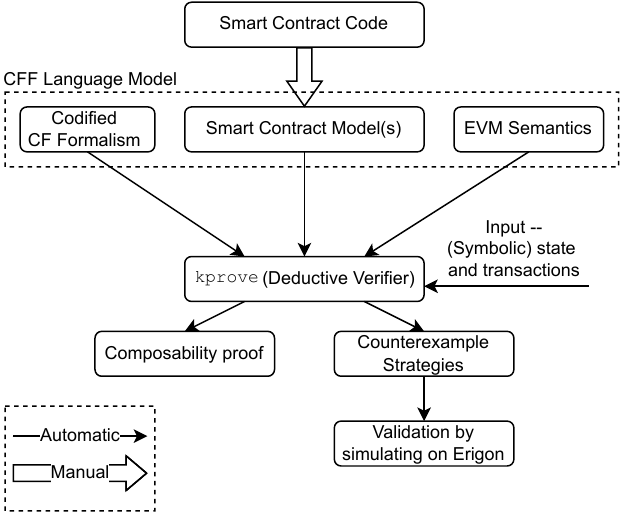}
    \caption{CFF architecture}
    \label{fig:cff_arch}
    \iffull \else \vspace*{-2em} \fi
\end{figure}
}
\iffull
Equipped with our formalism for reasoning about the security of DeFi instruments, we now discuss how best to apply it to real-world contracts.
\fi
To establish a formal methodology for DeFi security, we instantiate our Clockwork Finance Framework (CFF) in the K framework for mechanized proofs. 
\iffull
Appendix~\ref{subsec:considerations} elaborates on why we chose K. 
\else
We include a discussion in the full version~\cite{fullversion} on why we chose K. 
\fi

\iffull 
We first describe challenges with formal verification and how we overcome them for CFF (Section~\ref{subsec:challenges}). We describe the design and implementation of CFF, with an emphasis on the soundness and completeness properties in Section~\ref{subsec:design_impl}. We then discuss how our CFF executable models are obtained and their properties in Section~\ref{subsec:equivalence}. Finally, we use the Uniswap contract (\fig~\ref{fig:uniswap_contract}) as an example to describe our CFF executable models (Section~\ref{subsec:usability}).
\fi



\subsection{Scaling Formal Verification for CFF}
\label{subsec:challenges}
Unfortunately, simply applying formal verification tools out-of-the-box to our models turns out to be impractical. To understand why, we need to step back and consider the number of paths from the start of model execution to termination of execution that must be explored by any formal verification tool, in an attempt to exhaustively prove a specific property holds in all possible executions. While general sound formal verification techniques are known to be undecidable, in practice they usually suffice for typical programs, where execution semantics are primarily linear. Branching conditions (e.g., control-flow branches) generally cause an increase in the number of paths to explore. Here, the number of paths that must be explored could be exponential in the number of branches in the program. 


However, in our setting, miners can choose \emph{any} ordering of transactions (others' transactions plus their inserted transactions) when creating a block. This means that the number of unique paths needed to fully explore the search space is $O(t!)$ where $t$ is the number of transactions to which we apply our CFF. This is asymptotically and concretely more expensive than usual program verification proofs, and consequently impractical for even a modest number of transactions. One existing parallel in the literature is to semantics of concurrency (see e.g.,~\cite{huang2014maximal}), in which many possible interleavings must be reasoned about. Nonetheless, most such tools either work with a small concurrency parameter, or do not attempt to exhaustively analyze the full state space of interleavings. They attempt only to find plausible bugs based on observed behavior.

\mypara{Search-space reduction.}
To make formal verification practical, we must first reduce the search space to a tractable set of paths. We found that reasoning about all possible transaction orders in the formal model directly results in a large amount of repeated computation as equivalent states are explored 
(e.g., by re-ordering non-dependent transactions). 

    Therefore, we apply the following optimizations (both general and DeFi instrument specific) to our analysis to reduce the number of paths by excluding semantically equivalent orderings. First, transactions carry a \textit{per user} serialization number (``nonce'') such that transactions that are mined out of order are considered invalid. Thus, we consider orderings equivalent if for each non-miner player, the longest consecutive (by nonce) subsequence of transactions is the same (since transactions not belonging to these subsequences are invalid). Second, transactions that interact with different contracts (such as swaps on different Uniswap pairs) are independent of each other. They produce equivalent orderings if reordered relative to one another. Third, we allow for models to incorporate  application-specific optimizations. We do so, for example, for our AMM models. The constant-product AMM function is provably path independent~\cite{buterin2017path}. For example, if the miner makes multiple sequential trades selling an asset, exploring their reorderings will have no effect. This optimization cuts the work required by our tool by orders of magnitude, and allows CFF to explore problem instances with larger number of transactions. Note that the above optimizations\footnote{We encode our optimizations in the \texttt{run\_uniswapv2\_experiments} \& \iffull the \fi \texttt{run\_mcd\_experiments} files provided in our Github repository.} are all sound. While we would ideally like to avoid application-specific optimizations even if sound, and our tool does support this, we found that they substantially improved performance. Similar optimizations will likely be helpful for any MEV analysis.

\subsection{Design and Implementation}
\label{subsec:design_impl}
    \fig~\ref{fig:cff_arch} shows the CFF architecture.
    The core of CFF is the language model whose syntax and semantics are fed to the K framework to automatically generate the deductive verifier \texttt{kprove} along with other tools for parsing, compiling, and symbolic execution of transactions. Note that because of gas limits on the size of a block and computation done in a transaction, the semantics of our language model are decidable. Due to~\cite{rosuverifier}, this implies that the deductive verifier we obtain is sound and complete for any reachability property of our language model. Since we model the problem of economic security as a reachability problem (of a state with certain MEV), CFF is attack exhaustive for the transactions and contracts it is given. Any sources of unsoudness in our verification come from our language model, which we now describe.
    
    The first component of our language model defines the specific parameters for the MEV computation as per in the CF model (Section~\ref{sec:model}). It starts with defining a transaction type, block type, and player types. A player of type ``miner'' can produce a block by deciding the order of the mempool transactions and any inserted new transactions. Note that the miner cannot manipulate others' transaction \textit{contents}, as transactions are digitally signed by their creators. While our formalism from Section~\ref{sec:model} allows for arbitrary transaction insertions (including inserting transactions that create new contracts!), our implementation, for tractability, only handles user-specified templates of inserted transactions. These are \textit{template transactions} because their calldata is allowed to have symbolic parameters rather than concrete values. The lack of arbitrary transaction insertions in our implementation is one source of unsoundness when CFF proves upper bounds on MEV as a measure of economic security. Fortunately, this is not a theoretical limitation since limits on block sizes in Ethereum and other blockchains also constrain the number and type of permissible insertions. (e.g., a transaction cannot exceed the block size). Moreover, arbitrary transaction insertions are observed only rarely in the wild, and incur high gas fees. Barring transaction insertions that create a contract, given enough computing resources, CFF can be extended to reason about all types of insertions by enumerating all possible interactions with the given contracts.
    
    The second component of our language model defines the semantics of the smart contract code and specific smart contract models. The K Framework has built-in semantics of basic arithmatic and logical operations. We enrich it with definitions of currency transfers and smart contract storage. These limited semantics are sufficient to express our smart contract models, and make the verification much faster than incorporating full EVM semantics. We then manually translate the smart contract code into CFF models written in K; we give details in Section~\ref{subsec:equivalence}. This needs to be done \textit{only once} for each contract. Note that our limited semantics of EVM and the way we obtain our CFF models mean that any successful trace obtained in the actual smart contract can be obtained in our CFF models (but not vice-versa). We elaborate on this in Section~\ref{subsec:equivalence}. As a result, the proofs of economic security found by CFF on the smart contract models for the given transactions also hold for the actual smart contracts (i.e., there are no false positives introduced here). However, this over-approximation introduces false negatives, i.e., the counterexample strategies (sequence of transaction) found by \texttt{kprove} may not all be valid on the actual smart contracts. To validate potential counterexample strategies, CFF simulates the sequence of transactions in these strategies on an archive node at the appropriate block height. This validation step is fully automatic and takes on average 39 milliseconds per counterexample with a standard deviation of 22 milliseconds. 
    
    We have contributed our implementation for simulating transactions at a given block height into the latest public release of the Erigon (popular Ethereum client) software and is now accessible via the \texttt{eth\_callBundle} JSON-RPC API.

    The gap between our smart contract models and the actual corresponding smart contracts can be closed by substituting the second component of our language model with KEVM~\cite{hildenbrandt2018kevm}. There is a tradeoff, however: the performance of CFF would degrade with use of KEVM. We leave exploration of KEVM integration to future work. We also believe there is room for a wide range of hybrid approaches, including randomized testing / fuzzing, symbolic execution, concolic testing~\cite{yun2018qsym}, and machine learning, to attempt to learn and optimize for this state transition model.

{
\subsection{Equivalence and Over-Approximation in CFF models}
\label{subsec:equivalence} 

\begin{figure*}
    \centering

\begin{lstlisting}
Status: SUCCESS
Returns: msg.value * 997 * token_reserve / ((self.balance - msg.value) * 1000 + msg.value * 997)
Path condition: deadline >= block.timestamp /\ eth_sold > 0 /\ min_tokens > 0 /\ not(#status(130) == 0) /\ self.balance - msg.value > 0 /\ token_reserve > 0 /\ (msg.value *Word 997) /Int msg.value == 997 /\ (input_amount_with_fee *Word output_reserve) /Int input_amount_with_fee == output_reserve /\ (input_reserve *Word 1000) /Int input_reserve == 1000 /\ not((input_reserve * 1000) + input_amount_with_fee < (input_reserve * 1000)) /\ not(tokens_bought < min_tokens) /\ not(#status(133) == 0) /\ not(#transferReturn(133) == 0)
\end{lstlisting}
\vspace{-10mm}

\begin{lstlisting}
Status: REVERT
Path condition: not(deadline >= block.timestamp and eth_sold > 0 and min_tokens > 0)
\end{lstlisting}
\vspace{-6mm}

\hrule

\begin{lstlisting}
Address in TokenOut gets (997 *Int TradeAmount *Int USwapBalanceOut) /Int (1000 *Int USwapBalanceIn +Int 997 *Int TradeAmount)
\end{lstlisting}
\vspace{-6mm}

\caption{Two example paths from Uniswap EVM contract verification through symbolic execution (above line, prior work~\cite{uniswapverification}), and corresponding CFF model return value formula (below line, \texttt{uniswap.k}).}
    \label{fig:unipaths}
    \iffull \else \vspace{-2em} \fi
\end{figure*}

We now discuss a general approach we used for creating our models. \textbf{This is not the only way to create CFF models}, but is the most formal possible approach, allowing for a clear equivalence between the EVM executing on-chain and the CFF model. The approach proceeds in three steps:

\begin{enumerate}[noitemsep]
    \item \textbf{Path decomposition/verification (before CFF):} Perform a path decomposition of the target smart contract, a standard technique required for formal verification of smart contracts in KEVM~\cite{hildenbrandt2018kevm} (outside of CFF). For the highest possible assurance, developing a fully validated model requires some developer effort beyond developing the EVM code, but minimal effort beyond developing a formal proof. Developing unvalidated models is possible, but in our development of CFF we have instead started with a formal proof about the target EVM code (see~\cite{uniswapverification}) and built a CFF model from there.
    \item \textbf{Pruning/selection and refinement:} Select all relevant paths in (1), prune reverting or non-MEV-relevant paths (e.g., utility functions), and import these remaining paths into a CFF model. This process can mainly be automated from (1), but some minimal developer judgment on which paths to include can improve analysis speed. 
    \item \textbf{Argument of equivalence:} If any changes to the obtained path formulas are desired, e.g., variable renaming for readability, argue equivalence of the CFF model in (2) to the path decomposition/formal EVM proof in (1) (see our example code for Uniswap equivalence).
\end{enumerate}

We expand on each on these three steps below.

\mypara{(1) Path decomposition.} The first step is simply performing a standard complete symbolic exploration of the EVM bytecode of the smart contract. This is a general pattern of smart contract development that is not specific to our work. To prove a contract correct in the K framework, K executes the EVM code against the KEVM semantics~\cite{hildenbrandt2018kevm} on fully symbolic input and EVM state, and decomposes all possible return values of the contract into a mathematical formula over all possible inputs. This involves many possible paths, which represent symbolic branches through the EVM contract code. A contract is said to be verified in K if desired security properties hold as invariants on every such path. A formal specification of a contract's behavior in K is equivalent to a specification of its behavior on each possible path.

This path decomposition step is not mandatory (one can simply directly give a mathematical specification as on the bottom of \fig~\ref{fig:unipaths} without decomposing EVM code), but it leads to high assurance models by construction, and requires little developer effort beyond a formal proof (which has independent value), so it is the technique we choose to describe.

This approach is standard for verifying high-assurance smart contracts. An ideal case study is provided specifically for Uniswap in a report commissioned by Uniswap to demonstrate the security of their contracts, described in~\cite{uniswapverification}. We directly use the results published for the Uniswap EVM contract by Runtime Verification Inc. of the process above to generate our CFF model of Uniswap. We execute their proofs of correctness for Uniswap to extract all paths in the EVM code. One such example path is shown in the upper box of \fig~\ref{fig:unipaths}, for the tokenToEthInput function, which swaps a token for ETH.

This generated path states that, if the listed path condition (Line 3) is met across input and world state (where the variable names have been manually labeled in some cases by the author of the formal proof, in this case Runtime Verification, Inc.), the return value of the EVM call (Line 2) will be successful and will output the formula listed. This formula contains variables that can be sourced from the input or world state.

The box just above the horizontal line in \fig~\ref{fig:unipaths} is another path in which EVM execution reverts when the input and world state meet different conditions.

\mypara{(2) Pruning/selection and refinement.} In our CFF model, we include a simplified variant of the top path, shown below the line in \fig~\ref{fig:unipaths}. We do not include the reverting bottom path, and can simplify the resulting path conditions (our model has no concept of e.g. deadlines).

By choosing to omit all reverting paths, we are able to study the properties of interactions between the compositions of non-reverting paths without reasoning about the complex branching and path conditions that may lead to these reverts, simplifying our underlying queries to K (the size of the Z3~\cite{z3solver} formula \texttt{kprove} queries on the backend is proportional to the complexity of the models~\cite{rosuverifier}).

Omitting reverts will never reduce the amount of MEV found by our search. The only consequence will be that some attack we explore \emph{would revert} in an actual execution, but will not in our analysis. This can only add, not remove, MEV to each execution. We allow for initial discovery of such executions through our automated tool, and filter them out through our automated validation described in~\ref{subsec:design_impl}.

\mypara{(3) Argument of equivalence.} The final step is to argue that each path in our CFF model is equivalent to a successful path generated by contract verification. There are two possibilities. One can manually algebraically inspect the formulas, reasoning about equivalence on-paper. There is a very direct argument in this case that the formulas are structurally the same by inspection, modulo variable renaming.

For automatic equivalence, one can turn to unification, a standard technique for creating a map of variable renamings in syntactically equivalent formulas, to create a substitution of variable names.  This can be automated to verify a large number of paths against automatically performed path decomposition. We provide an example argument using unification~\cite{unification} in the 
\texttt{cff\_model\_equivalence} directory. This example shows that our Uniswap CFF model is equivalent to the deconstructed paths from the Uniswap EVM code listed above it (arguing that the bottom and top of \fig~\ref{fig:unipaths} are equivalent). 

\vspace{2mm}

Using the above three-step approach, as we have demonstrated for Uniswap, yields several convenient properties of the resulting CFF models, which hold for all models we provide:

\mypara{Over-approximation.} Following this technique for model construction, any resulting model is an over-approximation of the EVM bytecode: it models exactly \emph{all non-reverting paths} on which the underlying contract successfully executes a transaction, and avoids modeling code paths in the contract bytecode or EVM-related semantic rules/details that do not affect relevant state or balances.

Such a model will over-approximate attacks, yielding some attacks that do not actually work on-chain because they may trigger an unmodeled reverting path (which we call \emph{false positives}). Because weeding out false positives is cheap and easily parallelizable, while reasoning about attacks is expensive and scales with underlying code complexity, the less literal approach of simplifying our model and filtering out reverting paths as needed allows us to explore a wider space of attacks than use of an exact but more complex model. 

Our techniques do not generate \emph{false negatives}, or non-reverting paths that could have occurred in practice but are not explorable by our search. This is because we maintain all non-reverting paths in our models, and strictly relax the relevant path conditions, as we show by example for Uniswap.

We say that under this relaxation---which allows for false positives but not false negatives---our models are \emph{over-approximations} of the underlying contracts.

\mypara{Development overhead.} Note that constructing the models according to the three-step strategy we've described requires virtually no developer effort/overhead for a developer who has already created a formal proof of contract correctness. Because formal verification is a popular technique for high-assurance contracts, in many cases, robust CFF models can be extracted from existing formal models with minimal additional developer effort. If developers do not want to formally verify their contracts, their CFF models must be coded manually and may prove less secure, as they will need to manually reason about or concretely validate the models' correctness against an EVM deployment (Section~\ref{subsec:expvalidation}). Note that this practice is still supported by our framework: we allow for reasoning about models that are not created using our three-step approach, or may be different than the EVM code they represent, as this may be useful for creating new contracts, perhaps before EVM code is even developed. Our intent is here instead to showcase the possibility and process for developing high-assurance, useful models such as our Uniswap model.

\mypara{Constant model overhead.} If models are developed using the above technique of symbolic path decomposition, we argue that our model size has a constant overhead compared to the corresponding smart contracts.  
In our work, the model used for verification is only the set of paths we deem relevant. Because we strictly remove paths and conditions from the verified EVM to create an over-approximation, our models are by definition smaller in both number and complexity of semantic rules than a complete contract model (the two relevant scaling metrics for formal language models). While the exact number of paths removed depends on the target contract, this puts our approach in contrast to approaches such as~\cite{zhou2021just}, which require, e.g., a path definition for each token pair, and thus scales poorly in size compared to the EVM contract itself. 


}

\subsection{CFF Uniswap Model}
\label{subsec:usability}
\begin{figure*}
\lstset{language=XML}
\begin{lstlisting}
<k> exec(Address:ETHAddress in TokenIn:ETHAddress swaps TradeAmount:Int input for TokenOut:ETHAddress)  => 
    AmountToSend = (TradeAmount *Int USwapBalanceOut) /Int (USwapBalanceIn +Int TradeAmount);
    Address in TokenIn gets 0 -Int TradeAmount;
    Address in TokenOut gets USwapBalanceOut -Int var(AmountToSend);
    Uniswap in TokenIn gets TradeAmount;
    Uniswap in TokenOut gets 0 -Int var(AmountToSend);
    ...
</k> 
<S> ... (Uniswap in TokenOut) |-> USwapBalanceOut (Uniswap in TokenIn) |-> USwapBalanceIn ... </S> 
<B> ... .List => ListItem(Address in TokenIn swaps TradeAmount input for TokenOut) </B> \end{lstlisting}
\vspace{-6mm}
\caption{Simplified Uniswap contract implemented in CFF. Ellipses match the rest of the program state in each cell.}
\label{fig:uniswap_k_contract}
\iffull \else \vspace{-1.5em} \fi
\end{figure*}

\fig~\ref{fig:uniswap_k_contract} shows an implementation (in K) of a snippet of our abstract Uniswap contract from \fig~\ref{fig:uniswap_contract}, the same contract we developed above using path decomposition. This refines our presented abstract contract and formalism  and transforms it into a computer-readable executable model, capable of being symbolically and concretely reasoned about by the symbolic execution engine and deductive verifier bundled with K.
\iffull 

A few key differences exist between our abstract contract and executable CFF models.  The first is that our executable CFF models contain an XML-like configuration consisting of \emph{cells}, or mathematical objects in the K Framework.  The \kcell, \scell, and \bcell cells of our executable model are featured in \fig~\ref{fig:uniswap_k_contract}.  Recall that our model represents a state machine executing Uniswap transactions.  The \kcell cell  specifies the transactions left to execute in the model and not yet included in a block, and can be viewed similarly to a program tape in a Turing-style execution machine. Note that execution of these transactions by CFF takes different paths corresponding to different orderings (including the original order in \kcell cell) and censoring combinations of these transactions.  The \scell cell represents the space of state mapping $S$ in CFF (Section~\ref{sec:model}), and stores a mapping of addresses to balances (state entries).  The \bcell cell represents the prefix of the block that has been constructed thus far by CFF. The model is consistent with our formalism by maintaining the invariant : $\scell = \blockaction(\bcell)(s_0)$ where $s_0$ is the initial state. When no instructions are left to execute by CFF (empty \kcell cell), the \bcell cell will represent a valid block. The final state and the contents of the valid block potentially vary for different execution paths. 

Another key difference is that our abstract contract has imperative semantics while K is fundamentally rewrite-based~\cite{chen-rosu-2018-isola} using ``A $=>$ B" as a special operator meaning ``A rewrites to B". Lines 1-6 in~\fig~\ref{fig:uniswap_k_contract} correspond to one of the rewrite operators in our CFF Uniswap model. Line 1 in~\fig~\ref{fig:uniswap_contract} then corresponds to ``A", or the initial configuration of our model when this semantic rule applies. This semantic rule describes execution when the next instruction to execute (first transaction in the \kcell cell, wrapped in an ``exec" keyword) is a token swap on Uniswap for swapping a symbolic amount \textsc{TradeAmount} of the input token denoted by symbol \textsc{TokenIn} for an output token denoted by symbol \textsc{TokenOut}. This swap rewrites to (``$=>$") a series of statements (Lines 2-6) that will execute one at a time with separate operational semantic rules in CFF. The ellipsis in the \kcell cell signifies the remaining transactions, in \scell cell signifies the rest of the state mapping, and in the \bcell cell signifies the prefix of the block constructed so far.

We leave further exploration of our executable models to the interested reader, and provide more notes on K-specific keywords in the above model in \iffull Appendix~\ref{sec:morek}\else the expanded version of this work~\cite{fullversion} \fi.  We also describe some refinements necessary for a model that behaves the same as deployed DeFi contracts and discuss subtleties of modeling MakerDAO liquidations \iffull in Appendix~\ref{sec:refinements} \else in~\cite{fullversion}\fi.
\else
We provide a more complete explanation of the syntax and semantics of this model in Appendix~\ref{sec:moreuni}.
\fi


\section{Experimental Evaluation}
\label{sec:experiments}


Using our full CFF models (not the simplified ones from above), we ran several experiments on data from Uniswap V1, Uniswap V2, SushiSwap, and MakerDAO, which we detail here. We aim to experimentally address several key questions:
\begin{enumerate}[noitemsep,leftmargin=*]
\item Are our CFF models accurate in reproducing the on-chain behavior of corresponding contracts? How efficient is this execution?
\item Can our models yield mechanized proofs about the extent of security of DeFi contracts and their composition while handling transaction reorderings and generic transaction insertions by miners?
\item Is use of our CFF models economically sensible in uncovering DeFi exploits on-chain?

\end{enumerate}

\mypara{Experimental setup.}
We ran most of our experiments on a mid-range server, equipped with an AMD EPYC 7401P 24-core server processor, 128GB of system memory, and a solid-state drive. For our computations, only the result is written to disk, and therefore our code is primarily CPU-intensive. We did not observe substantial memory overhead. For our parallelism experiments only, we used an AWS cluster of c5 instances with 256 vCPUs unless specified otherwise.

\mypara{Dataset collection.}
We used Google's BigQuery Ethereum to download every swap and liquidity event generated (until May 16, 2021) by Uniswap V1, Uniswap V2, and SushiSwap. These are three Uniswap-like AMMs that see substantial volume and are relevant to our analyses. In total, we collected 50,038,981 swaps, 2,317,917 liquidity addition events, and 844,709 liquidity removal events traded for 39,329 token pairs. For each token pair, we created a chronological log of events.

For MakerDAO, we used BigQuery to download all the log events generated (until May 16, 2021) by its core smart contract\footnote{https://github.com/makerdao/dss/blob/master/src/vat.sol} which manipulates CDPs (``vaults'') and updates stability fees and oracle prices. This data includes 322,771 CDP manipulation events (including 284 liquidations) across 18,642 CDPs and 25 collateral types. For each collateral type, we created a chronological log of all relevant events.

\subsection{Execution Validation and Performance Experiments}
\label{subsec:expvalidation}
We start with experiments to validate our CFF models with on-chain data and show the performance of our CFF tool.

\mypara{CFF model validation.}
We executed our CFF models on the collected data to ensure that our framework computes the correct final state, i.e., actual on-chain state. For the data from the three AMMs, we ran our executable semantics and inspected the resulting chain. We found that the resulting chain state from our CFF models matches exactly the on-chain state.  

We evaluated our CFF Maker model similarly. We found that the stability fees and final debt and collateral values for each CDP before liquidation exactly match the chain state. Since we do not model the liquidation auction mechanism, we do not expect the Maker model to accurately derive the state after liquidation events. MEV reported in our experiments only depends on the state before the first liquidation. The state after liquidation does not affect our results.

We provide scripts to download, process, and validate data for each protocol in the \texttt{all-data}~sub-folder of our repository. This validation mechanism highlights the importance of executable formal semantics: execution is a key requirement for validating abstract formal models against real-world data.

\mypara{CFF performance and parallelism.}
We evaluate the performance for two types of functionalities. First, for different UniswapV2 token pairs, we execute all corresponding on-chain transactions that manipulate the state in the same order as they happened. This measures the execution time of our model, or the time to derive the full on-chain state from the list of transactions. \fig~\ref{fig:time-txs} shows the time taken for our CFF to derive the state for different pairs as a function of the number of transactions executed for the pair. K's internal execution engine intrinsically gives roughly a 4x parallel speedup, which can be seen in the figure as a speedup of real/wall execution time over the amount of total CPU time required to compute model state.
These results, combined with our model validation, answer our first experimental question. Our modeling execution engine is sufficiently performant to ensure that our models' output matches the full chain state on Ethereum for all relevant transactions using only commodity hardware. For instance, the most active pairs traded on any AMM contained about 100k transactions in our data, and it took under 2 hours of CPU time to parse this data and perform end-to-end model validation.

Second, we evaluate the performance for exploring all possible reorderings available to a miner as part of their extraction of MEV, and analyze how the computation of optimal miner orderings can be efficiently parallelized.  This will allow us to use our models to also find transaction orderings not exploited by past miners. For these experiments, we use an AWS \texttt{c5.metal} instance optimized for computation.  This machine features 96 3.9 GHz cores running on Intel’s Second Generation Xeon Cascade Lake processors, with 192GiB of available memory. 
In \fig~\ref{fig:time-threads}, we report the average execution times for attacks with 7, 8 and 9 transactions to be reordered using different number of CPU cores. As discussed in Section~\ref{subsec:amm_experiments}, blocks with 10 or more relevant transactions (i.e., transactions interacting with our models) are rare. Transactions chosen for this particular figure are UniswapV2 transactions and MakerDAO transactions explored using a composition of our UniswapV2 and MakerDAO models, so as to be representative of our MEV extraction experiments described in Section~\ref{sec:composition-experiments} ; changing to a different transaction type that deals with our other models does not have any material impact on the reported numbers.
%
Since we used a 96-core machine for our experiments, and given that K provides a 4x parallel speedup, we find that the real wall clock time converges to the fastest execution speed at around 24 worker threads before CPU limitations are reached.  Given that our parallel exploration of possible state spaces has no synchronization between parallel workers, the embarrassingly parallel nature of this problem suggests future scaling across machines to be a natural direction for handling larger problem instances.  Before the scale ceiling of 24 parallel workers is hit, approximately linear scaling is visible in \fig~\ref{fig:time-threads}, with some overhead associated with scheduling threads and managing shared system resources.



\begin{figure}[t!]
\centering
\includegraphics[scale=0.45]{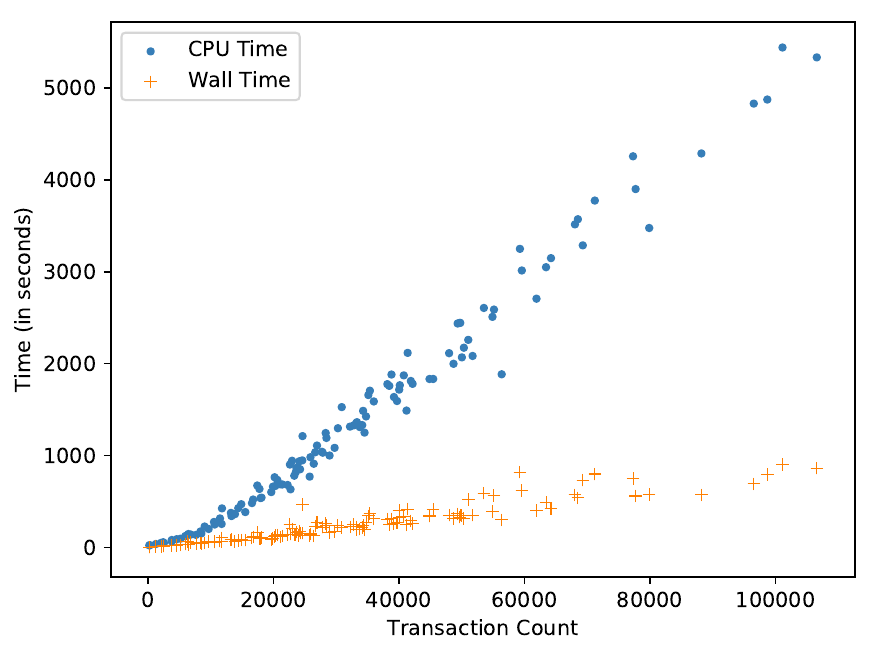}
\vspace{-0.5em}
\caption{CFF execution time to evaluate and validate resultant state for a transaction sequence.}
\label{fig:time-txs}
\iffull \else \vspace{-2.3em} \fi
\end{figure}

\begin{figure}[t!]
\centering
\includegraphics[scale=0.45]{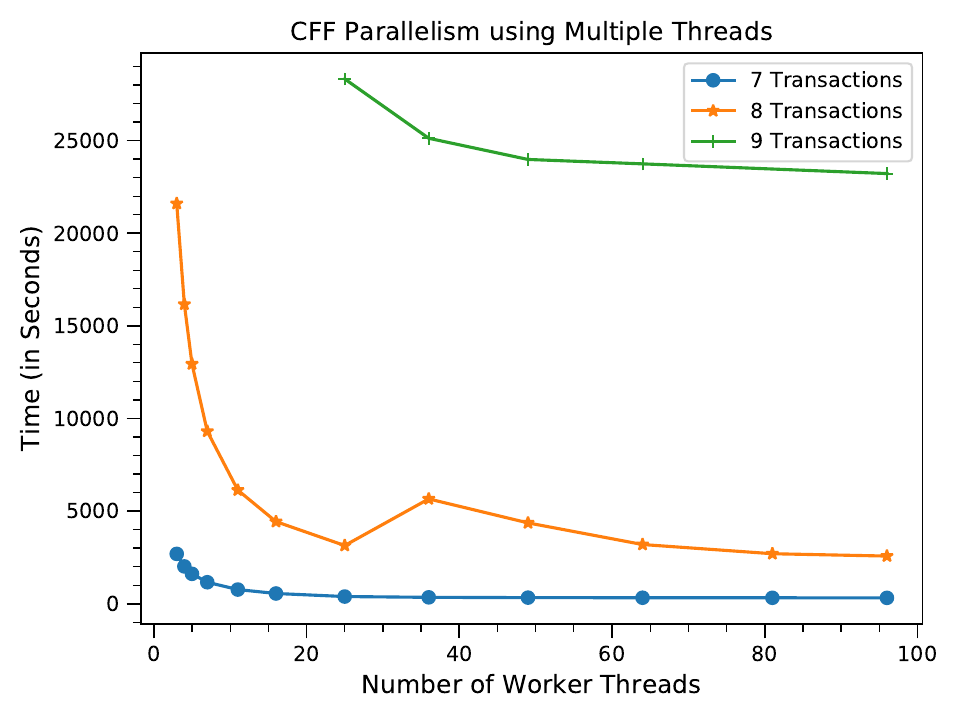}
\caption{CFF Parallelism: Time taken to explore all reorderings with varying number of transactions (7,8,9) as a function of the number of threads used.}
\label{fig:time-threads}
\iffull \else \vspace{-2em} \fi
\end{figure}

\subsection{Mechanized Proofs and Symbolic Invariants}
\label{sec:proof-example}
We now use the deductive program verifier (\texttt{kprove}) from the K framework along with our refined CFF models to assess the security of the composition of Sushiswap and UniswapV2. To achieve this, we have to specify the initial state of the two contracts along with the set of transactions interacting with these particular contracts. These transactions include the user transactions as well any given symbolic transactions inserted by the miner. We also specify a reachability claim that MEV is no greater than 0. If the two contracts compose securely as per our definition in Section~\ref{sec:model}, then running \texttt{kprove} generates a deductive proof for the specified claim. On the other hand, when the composition under the specified initial state is insecure, \texttt{kprove} automatically generates a counterexample strategy (i.e. sequence of transactions) and a symbolic invariant for MEV in terms of the symbols appearing in the initial state or inserted transaction template. More precisely, the symbolic invariant is a set of (satisfiable) formulae representing the amount of MEV in terms of the variables appearing in the specified initial state and the transactions applied to it.

While our CFF can reason about the security of any specification of initial state and set of transactions, we describe an example detailed specification in Appendix~\ref{sec:kproof} capturing one of the biggest arbitrage opportunities\footnote{\href{https://etherscan.io/tx/0x2c79cdd1a16767e90d55a1598c833f77c609e972ea0fa7622b70a67646a681a5}{0x2c79cdd1a16767e90d55a1598c833f77c609e972ea0fa7622b70a67646a681a5}} observed on-chain involving two AMMs as reported in~\cite{qin2021quantifying}. To capture this arbitrage opportunity, we specify the AMM states at blocknumber 10854887, the user transactions interacting with the AMMs, and swap transactions inserted by miner with symbolic parameters (representing the size of miner's trade). We plot the MEV formula output by our CFF representing the available MEV opportunity as a function of the size of the trades inserted by the miner in Appendix~\ref{sec:kproof}. The arbitrageur in this arbitrage made a profit of 76 ETH, while our CFF reports a higher MEV of 123 ETH {\em not captured by miners}.

This example illustrates the power of CFF in finding opportunities left on the table by arbitrageurs currently.  Note that our refined mechanized models account for fees, slippage, and integer rounding and hence, the size of the opportunity available to the miner is slightly less than the theoretical value derived in Section~\ref{sec:composability}.
We provide the full specification in \texttt{proofs} sub-folder of our repository. CFF can also mechanically reason about the security of many AMMs composed together, as well as more complex composed smart contracts, but we leave this to future work.

\subsection{AMM Experiments}
\label{subsec:amm_experiments}
We ran a series of experiments on our CFF models for the three AMMs to quantify the MEV extractable from them, and prove the utility of our models further by furnishing real-world insights into available MEV. Our experiments are intended to validate the ability of our tool to uncover profit-seeking miner strategies, and can easily be used for other DeFi contracts.


\iffull\begin{figure}[t!]\else\begin{figure}[t!]\fi
    \centering
    \includegraphics[width=0.85\columnwidth]{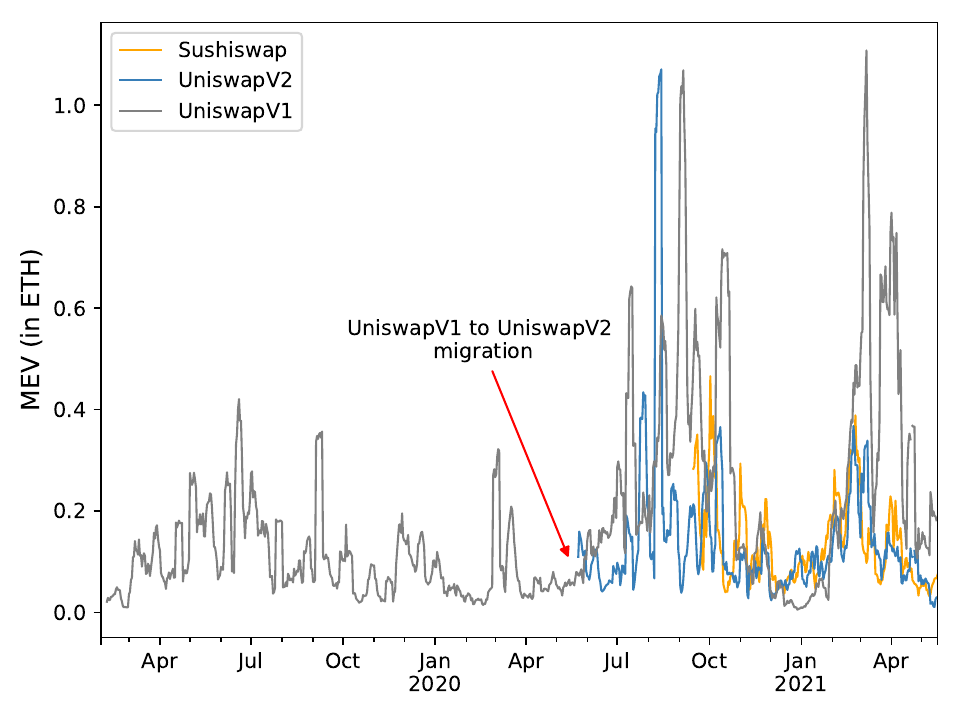}
    \iffull \else \vspace{-0.25em} \fi
    \caption{7-day moving average of MEV per block in a random sample of 1000 random blocks in each month. 1 ETH $\sim$ 3200 USD at the time of writing.}
    \label{fig:random-mev}
    \iffull \else \vspace{-2em} \fi
    \iffull\else\end{figure}\fi
    \iffull \end{figure} \fi

\iffull\begin{figure}[t!]\else\begin{figure*}[ht]\fi
\centering
\begin{subfigure}{\triplefigwidth}
\includegraphics[width=\columnwidth]{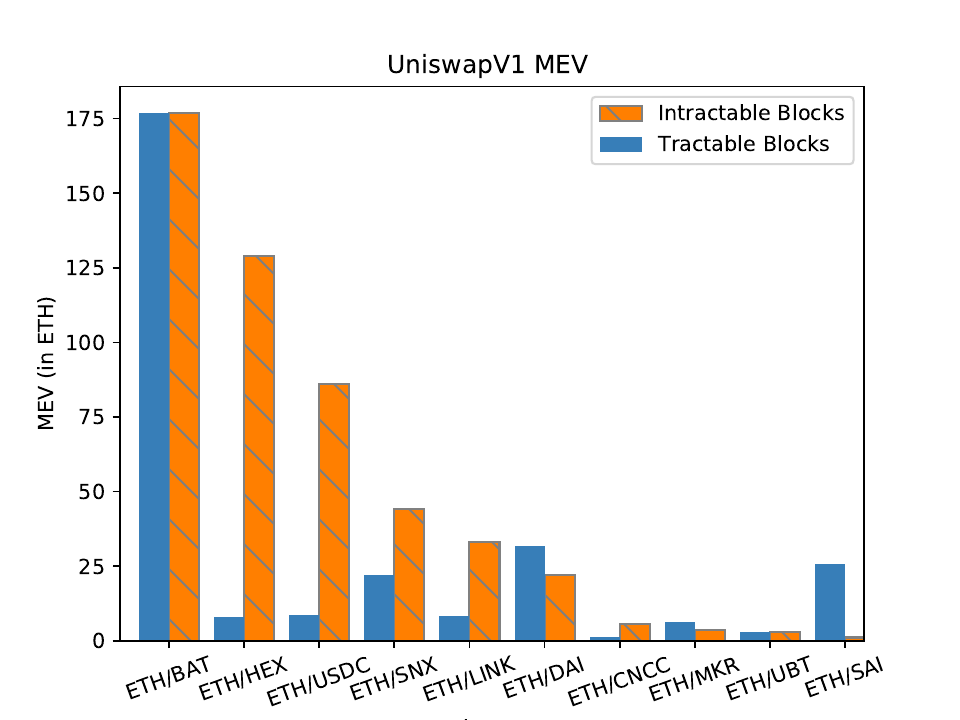}%
\caption{Uniswap V1 MEV}
\label{fig:uniswapv1_mev}
\end{subfigure}\hfill
\begin{subfigure}{\triplefigwidth}
\includegraphics[width=\columnwidth]{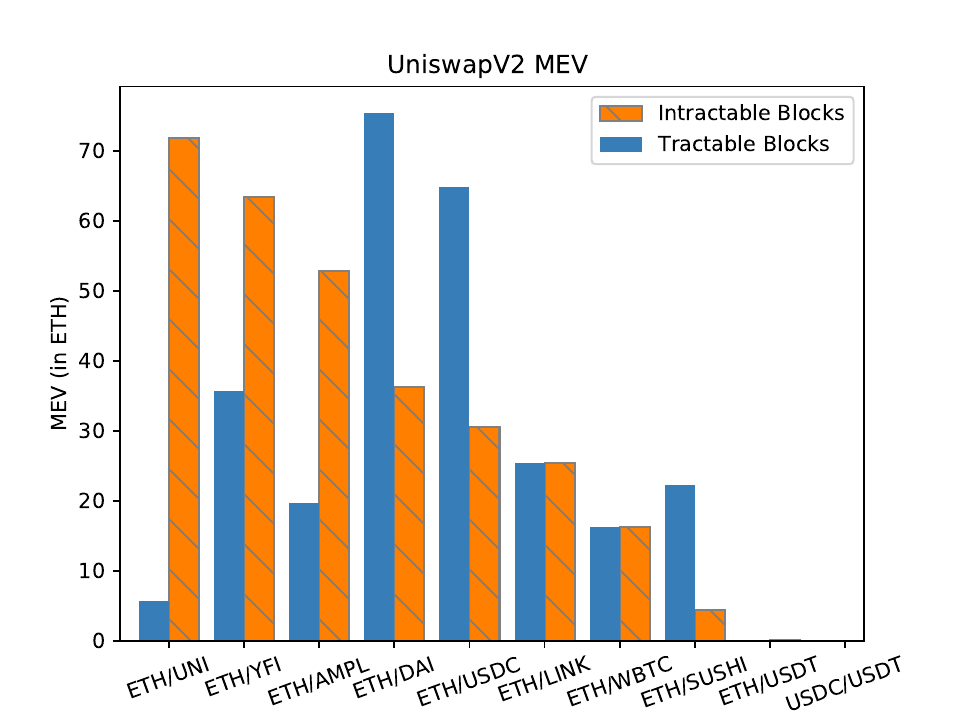}%
\caption{Uniswap V2 MEV}
\label{fig:uniswapv2_mev}
\end{subfigure}\hfill
\begin{subfigure}{\triplefigwidth}
\includegraphics[width=\columnwidth]{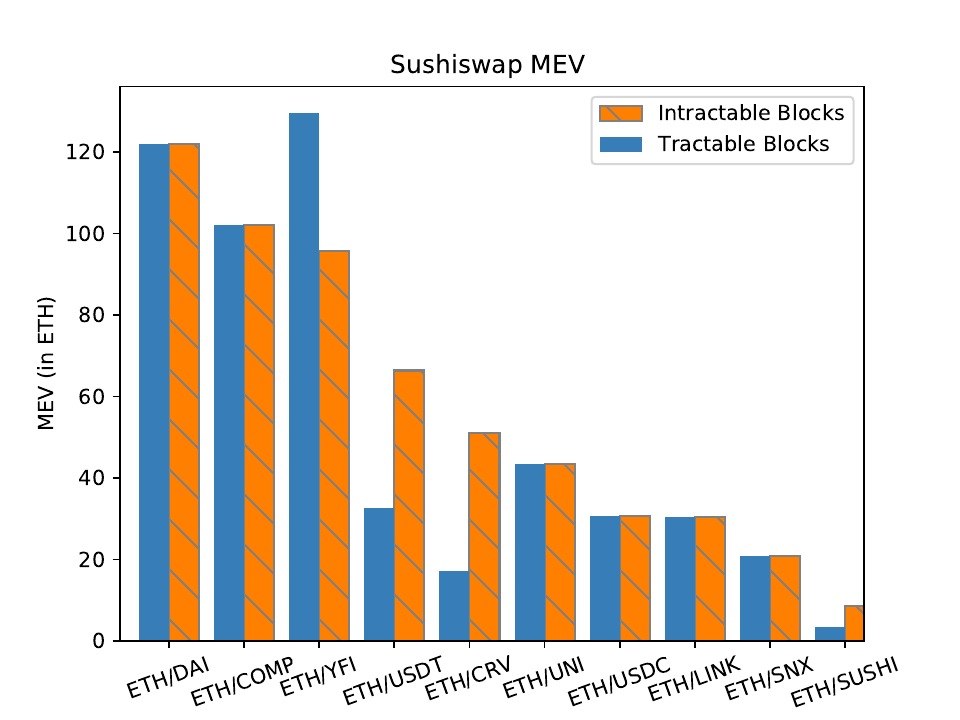}%
\caption{Sushiswap MEV}
\label{fig:sushiswap_mev}
\end{subfigure}
\caption{Highest observed MEV blocks for the top 10 most active token pairs in our dataset. Intractable blocks have 10 or more transactions involving the pair, and are partially explored by our tool through a random search.}
\label{fig:top10-mev}
\iffull\else\end{figure*}\fi
\iffull \end{figure} \fi

\iffull\begin{figure}[t!]\else\begin{figure*}[ht]\fi
    \centering
    \begin{subfigure}{\triplefigwidth}
    \centering
        \includegraphics[width=\columnwidth]{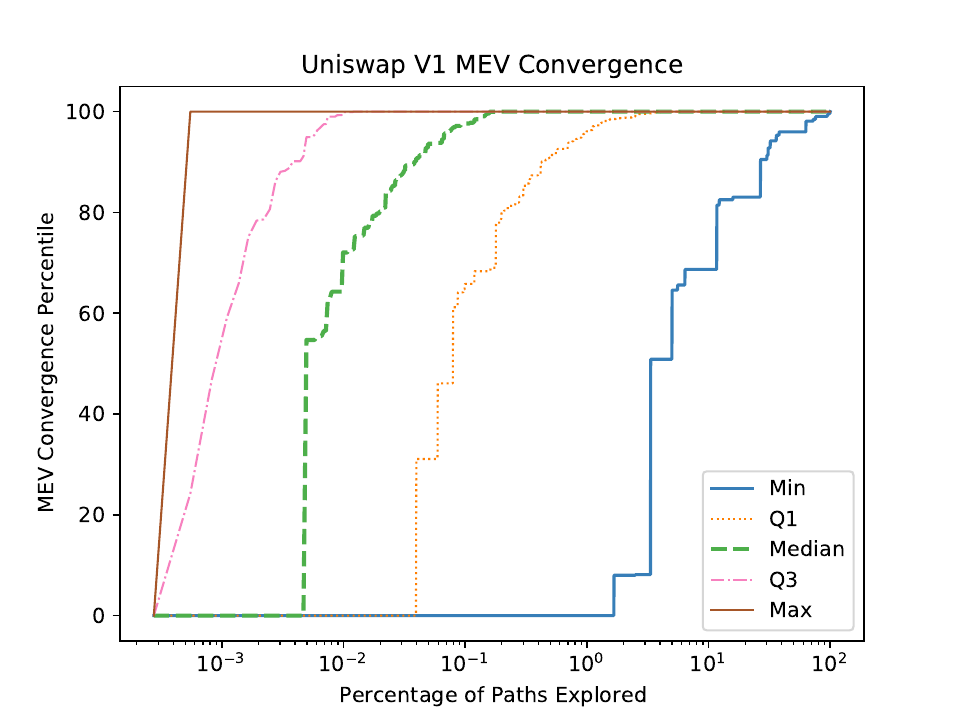}
        \caption{Uniswap V1 Convergence}
        \label{fig:uniswapv1_convergence}
    \end{subfigure}
    \hfill
    \begin{subfigure}{\triplefigwidth}
    \centering
        \includegraphics[width=\columnwidth]{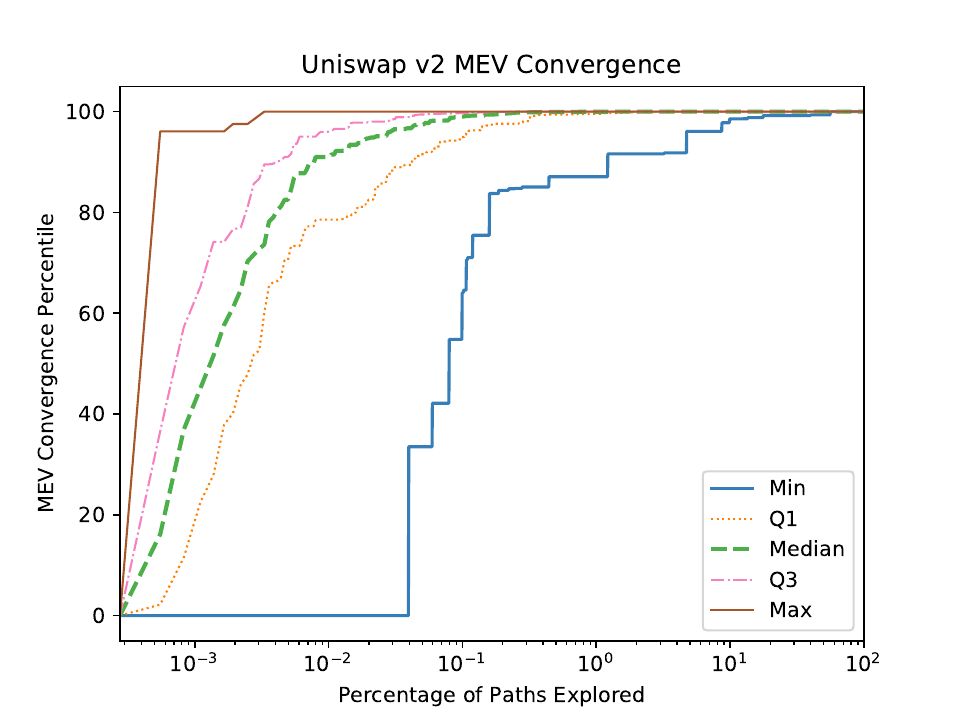}
        \caption{Uniswap V2 Convergence}
        \label{fig:uniswapv2_convergence}
    \end{subfigure}
    \hfill
    \begin{subfigure}{\triplefigwidth}
    \centering
        \includegraphics[width=\columnwidth]{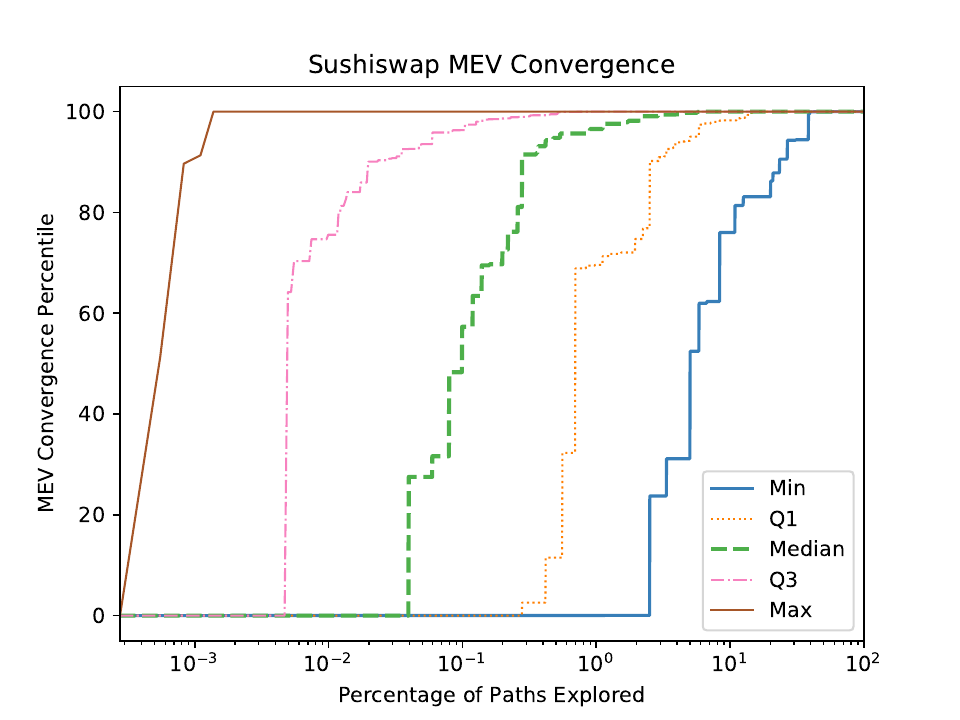}
        \caption{Sushiswap Convergence}
        \label{fig:sushiswap_convergence}
    \end{subfigure}
    \caption{Convergence towards the optimal MEV for a random sample vs percentage of total paths explored for tractable blocks.}
    \label{fig:approx-convergence}
    \iffull \else \vspace{-2em} \fi
\iffull\else\end{figure*}\fi
\iffull \end{figure} \fi

\mypara{Reordering to lower-bound MEV.}
We consider all possible transaction reorderings that can be performed by a miner. For this, we do not consider transaction insertion by miners, and therefore we will find a lower bound on the MEV by computing the difference between the most and least profitable transaction ordering with respect to a user who colludes with the miner to get the most profitable ordering. Otherwise stated, we define MEV in this setting as the amount a miner could make with a composed ordering bribery contract. We expand on this subtle difference in Appendix~\ref{sec:ammbounds}. In certain cases of restrictions imposed by other (wrapper) contracts involved in the transaction, not all reorderings might be valid. We automatically validate the optimal ordering in the last phase of CFF as described in Section~\ref{subsec:design_impl}. Note that providing our CFF tool with the models of the other (wrapper) contracts interacting with the AMM contracts would ensure that this validation is unncessary, however we defer this to future work.

For each AMM that we support, we conduct two kinds of analysis: First, we analyse the average MEV in a randomly sampled block (having transactions for any token pair) obtained by sampling 1000 blocks per month that have at least 2 transactions interacting with it. We report the 7-day moving average of MEV found per block as a time series plot in \fig~\ref{fig:random-mev}. For the year 2021, total MEV across all the AMMs in our random sample is 1.5 million USD, which by extrapolation comes to about 56 million USD per month in 2021. Second, we examine the token pairs with the top 10 highest number of transactions, and randomly sample 30 blocks involving these token pairs. Our tool can fully explore the state space for blocks with 9 or fewer AMM transactions; we call these blocks ``tractable''. We report the average MEV found per block (for each token pair) in our random sample in \fig~\ref{fig:top10-mev}.

\mypara{Intractable-block exploration.}
For blocks with 10 or more relevant AMM transactions (i.e., transactions that interact with the AMM), we do not explore the full search space. Instead, for these ``intractable blocks,'' we compute the MEV through a randomized search. We explore 400,000 paths, but randomize which paths are explored. The average MEV values for intractable blocks in our random sample are also reported in \fig~\ref{fig:top10-mev}. Because our primary aim was developing and validating our models’ ability to find attacks, we did not optimize this search for performance further. Using further optimization or more parallel computation could likely yield more accurate estimates for intractable blocks, but we defer this to future work. We found that ``intractable'' blocks are rare in our dataset. \fig~\ref{fig:txcount} shows a histogram of the number of blocks containing a particular number of AMM transactions.

\mypara{Approximate convergence.}
To support our exploration of intractable blocks, a natural question is to what extent a random search on a sample of orderings approximates the MEV for a given block. For this, we look at how the MEV converges for tractable blocks as more paths are explored iteratively. For each AMM, we randomly explore the same tractable blocks in our random sample, and report the quartiles for MEV convergence in \fig~\ref{fig:approx-convergence}. On average, we uncover 70\% of MEV in more than 90\% of the instances by exploring just 1\% of total paths. Since we explore 400,000 paths for intractable blocks, we explore roughly 11\% of the total paths for blocks with 10 transactions, and roughly 1\% of the total paths for blocks with 11 transactions. As evident from \fig~\ref{fig:txcount}, blocks with more than 11 transactions are even more rare. 

\begin{figure}[t!]
\centering
\includegraphics[width=0.75\columnwidth]{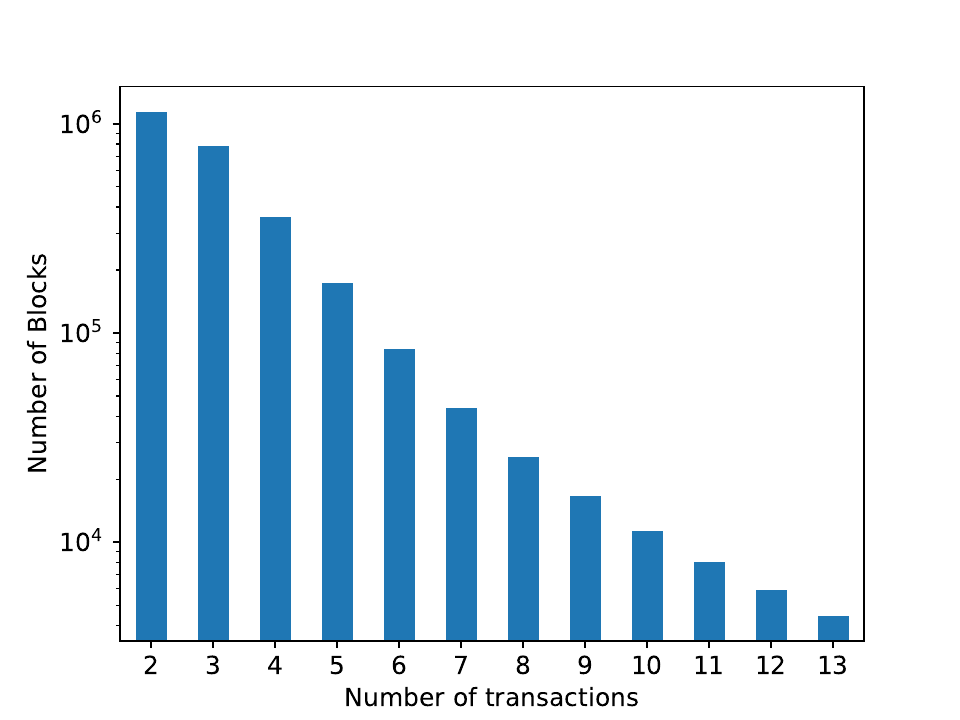}
\caption{Distribution of AMM transactions in blocks}
\label{fig:txcount}
\iffull \else \vspace{-2em} \fi
\end{figure}

\mypara{Reordering insights.}
Our results show that UniswapV2 exposes significantly less MEV compared to UniswapV1 and Sushiswap, thanks to the huge liquidity on UniswapV2. It is interesting to note that some of the token pairs have negligible MEV compared to the rest. It turns out that all of these pairs include a stablecoin (or are both stablecoins, e.g., USDC/USDT), which exposes only small price fluctuations for users across reorderings. On the other hand, pairs with unstable prices (UNI, YFI, BAT) expose the highest MEV (75-175 ETH). On manual examination, we find that the blocks exposing huge MEV ($\sim$100 ETH) often involve a user making a big purchase of token X with token Y and being either frontrun or backrun by a bot. In \iffull Appendix~\ref{sec:deepdive} \else the full version of this work~\cite{fullversion}\fi, we provide a deep dive into a backrunning example---one of the highest MEV instances uncovered by our tool.





\subsection{Composability Experiments}
\label{sec:composition-experiments}
\begin{figure}[t!]
\centering
\includegraphics[width=0.75\columnwidth]{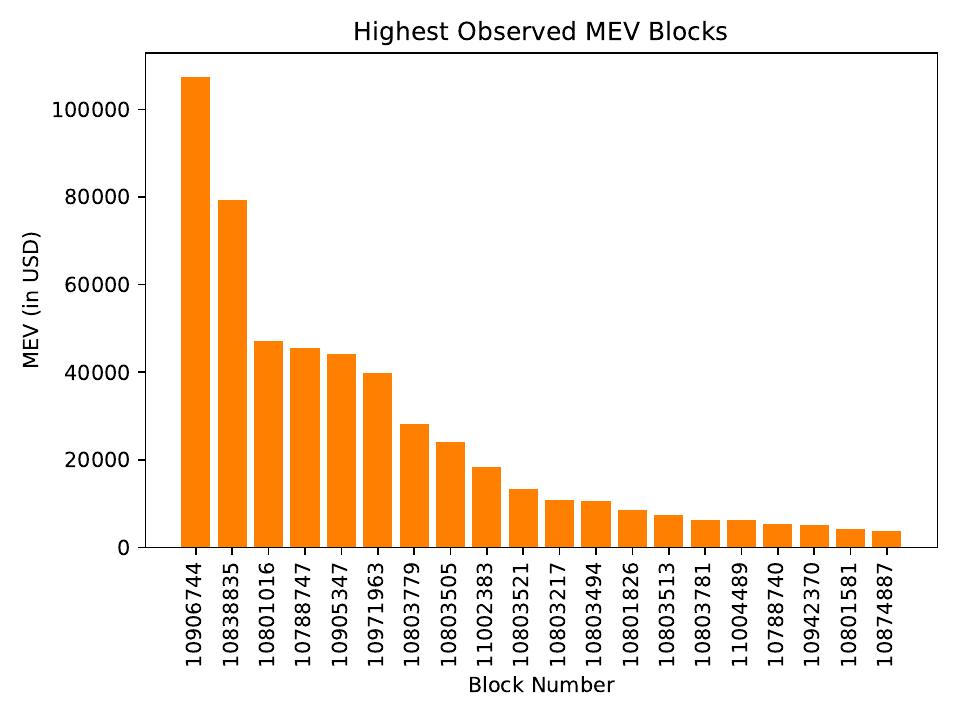}
\caption{MEV for Maker composed with Uniswap V2}
\vspace{-1.5em}
\label{fig:maker-mev}
\end{figure}

To highlight the capability of our tool in finding MEV in the composition over multiple contracts, we consider our running example of the composition between MakerDAO and Uniswap. Here, we use the price from Uniswap V2 instead of the one from Maker's oracle module. Although MakerDAO does not currently use Uniswap as a price oracle, making the attacks in this section purely theoretical, this change reflects similar proposals from over 60 projects (enumerated at \url{https://debank.com/ranking/oracle}), as well as academic results suggesting a possible security argument for such a change~\cite{angeris2020improved}. Using our tool, we can compute the MEV exposed as a result of MakerDAO adopting this potential composition.

\mypara{Oracle attacks.}
We extend the AMM reordering experiments from Section~\ref{subsec:amm_experiments} to allow for an additional miner action, where the miner can liquidate under-collateralized CDPs. Formally, if CDPs with index $1,...,n$ are open in the system, the set of transactions $s$ is extended to include a liquidation of all $n$ CDPs by a miner account $M$. We then compute the total amount of profit earned by $M$ from any successful liquidations as a lower-bound metric for MEV.

To quantify this, we examine on-chain data for the top 100 CDPs and blocks in MakerDAO when the CDPs are at the highest risk of liquidation (i.e., CDPs with the least collateral-to-debt ratio). For a given block, we consider possible reorderings over all Uniswap V2 and MakerDAO transactions, and then compute the MEV as a result of a miner inserting a CDP liquidation transaction. We report this in \fig~\ref{fig:maker-mev} for the top 20 blocks with the largest liquidations (calculated using the collateral value at the time of liquidation). We found a total MEV of 542,827 USD---orders of magnitude larger than the block rewards and transaction fees for these blocks.  These experiments can be reproduced using the \texttt{run\_mcd\_experiments} script in our Github repository.

\subsection{Other Notable Attacks}
\label{sec:other-attacks}

\mypara{Airdrops.} Airdrops are a recent DeFi phenomenon where users who have taken a specific action on the blockchain (e.g., interacted with some contract function, held an NFT etc.) can claim a proportionate share of a newly released token. If the airdrop contract checks only the ownership in the current state and not the historical record, then it can be exploited using flash loans. One such exploit was observed recently where an attacker was able to exploit the much anticipated ApeCoin airdrop for BAYC NFT holders for approximately \$1,100,000~\cite{bayc-exploit}
\footnote{\href{https://etherscan.io/tx/0xeb8c3bebed11e2e4fcd30cbfc2fb3c55c4ca166003c7f7d319e78eaab9747098}{0xeb8c3bebed11e2e4fcd30cbfc2fb3c55c4ca166003c7f7d319e78eaab9747098}}. We reproduce this attack using CFF. To this end, we implement 3 new CFF models. First, a flash loans model that has a rewrite rule (with appropriate state updates) for allowing any player to borrow desired amount of a certain fungible token, call another contract, and then deposit back certain amount of the same fungible token. The rule \emph{requires} that the deposited amount be greater than the borrowed amount along with some fees. The second model is for a ``vault'' contract that allows for minting and redeeming of fungible tokens (``BAYC tokens'' here, which function as a fungible wrapper to the BAYC NFTs) against NFTs pooled together in a vault. The third model is for the na\"ive aidrop contract that allows any player to claim a fixed amount of ApeCoin tokens against their NFT for which a claim has not been passed before. We compose these models along with our Sushiswap model in CFF in order to obtain a strategy (counterexample to composability proof) that yields the same amount of profits in ETH as observed in the attack~\cite{bayc-exploit}. The strategy first borrows BAYC tokens through the flash loans model, calls into the vault model to redeem them for other players' NFTs found in the vault, claims the ApeCoin airdrop for these NFTs, then returns the NFTs back to the vault for the BAYC tokens which it pays back to the flash loans model with fees. Finally, the ApeCoin tokens are swapped on Sushiswap for ETH.

\mypara{Governance.}
We use CFF to illustrate how flash loans can be used to exploit governance mechanisms. To this end, we model a simple governance contract that finalizes the vote at a certain blocknumber based on the capital staked for or against the vote in the current state. As a proxy for the economic incentives from the governance vote, we model a simple betting contract (conceptually similar to $\betcontract$) that awards any player a certain amount of ETH if the vote passes. We use CFF to study the composability of the flash loans contract, the governance contract, and the betting contract. The current state supplied to CFF has symbolic variables $x$ for the flash loans reserves, $y$ for the capital staked in favor of the vote and $z$ for the capital staked against the vote. CFF outputs a strategy (counterexample to the composability proof) with the MEV equal to the betting contract reward less the flash loans fees, along with the condition:
\begin{lstlisting}[numbers=none]
(x > z - y) and (x > 0 ) and (y >= 0) and (z >= 0)
\end{lstlisting}
\vspace{-1.75em}
We provide the models for the flash loans, vault, airdrop, governance and betting contracts used above in the \texttt{cff\_models} directory, and provide the modules for reproducing the Airdrops attack and Governance attack in the \texttt{proofs} directory of our Github repository.



\section{Conclusion}

We have introduced a powerful and novel approach---that adopts the lens of miner-extractable value (MEV)---for reasoning about and quantifying security guarantees for DeFi contracts and their interaction. We have instantiated a number of semantic models in a new computational framework, the Clockwork Finance Framework (CFF)---an executable proof system that allows us to reason about the financial security of smart contracts.  We have provided open-source models, both abstract and executable, that represent key MEV-exposing deployed smart contracts.  We have shown how our definitions enable powerful proofs of composition for popular smart contract protocols, a missing ingredient in the current  deployment of DeFi contracts.  We believe that MEV, smart contract composition, and formal verification can serve as viable key ingredients for empirically and rigorously measuring and improving DeFi contract security.

\iffull \section*{Acknowledgments}
\else \mypara{Acknowledgments.} \fi
We thank Alexander Frolov for contributing to the AWS infrastructure needed to scale our experiments.
This work was funded by NSF grants CNS-1564102, CNS-1704615, and CNS-1933655 as well as generous support from IC3 industry partners. Philip Daian is a co-founder of Flashbots, a research and product organization developing solutions related to MEV, and has financial interests in several decentralized exchange protocols. Any opinions, findings, conclusions, or
recommendations expressed here are those of the authors and
may not reflect those of these sponsors.


\ifIEEE
\bibliographystyle{plain}
\bibliography{references}
\fi

\ifCCS
\bibliographystyle{ACM-Reference-Format}
\bibliography{references}
\fi

\ifUSENIX
\bibliographystyle{plain}
\bibliography{references}
\fi

\iffull \printbibliography \fi

\ifIEEE \appendices \else \appendix \fi
\iffull \else

\fi

\section{DeFi Exploits Background}
\label{appendix:defisploit}

\paragraph{Attacks and arbitrage.}
A number of practical issues arise in the deployment of secure DeFi systems. The first carries over from traditional software systems, since the guarantees upheld by any financial instruments are only as good as the software that underlies them. Both smart contracts in general and DeFi instruments have seen a wide number of software failures that eroded their security guarantees (e.g.,~\cite{atzei2017survey, qin2020attacking, qureshi2017hacker}), as well as corresponding academic and practical interest in rectifying these failures (e.g.,~\cite{breidenbach2018enter, chen2020survey, grishchenko2018semantic, hildenbrandt2018kevm, krupp2018teether}).

Further security concerns occur when the intended design and functional guarantees provided by DeFi systems do not align with the financial guarantees desired by users. For instance, in decentralized exchanges, while some users may assume that ``secure" exchange implies fair execution of their submitted orders, recent work~\cite{daian2020flashboys} finds that the core system design itself could prove unfair to its users, detailing how inefficiency can be exploited by arbitrageurs to introduce systematic security failures. In many instances, because ``attacks'' involve exploiting inefficiencies in DeFi systems in unexpected ways to profit financially, it makes sense to express security properties from an economic standpoint. Consequently, for DeFi systems, the distinction between profit-seeking techniques like arbitrage, and security failures can become difficult to draw.

Sometimes, the distinction is obvious. A typographical error in the program of a smart contract (as described in~\cite{buterin2016thinking}) is one common source of funds-loss that can be clearly categorized as a ``code bug."  Similarly, a decentralized exchange that is designed to allow arbitrage by programmatic actors analogous to those in traditional financial exchanges can be clearly classified as ``financial arbitrage"~\cite{daian2020flashboys}, and therefore not a security vulnerability. 

Unfortunately, many DeFi exploits fall less clearly into either category. A noteworthy example is the string of recent high-profile attacks on the bZx DeFi protocol, which relied on data from several external DeFi instruments. An attacker was able to break the invariants of the external contracts and use them to profit from the bZx protocol. Now, invariant checks could easily have been done within the bZx protocol code, in which case the root cause would be a software failure of the bZx contract. At the same time, the exploit could also be viewed as a design flaw since it is impossible to determine during the execution of a DeFi transaction, whether the external feed has been manipulated through arbitrage. In Appendix~\ref{appendix:bZx}, we use the bZx attacks as a case study to understand how the distinction between software exploits and fundamental design failures manifests in the real world.

\subsection{Case Study on $\mathrm{bZx}$ attacks}
\label{appendix:bZx}
In this section, we use the attacks on the bZx protocol to understand the nuances between security vulnerabilities in smart contracts, and arbitrage-like design choices. The bZx protocol was originally designed to allow decentralized margin trading and lending, and was the target of recent high-profile attacks. The core of one of these attacks was the ability of a malicious attacker to use flash loans to perform a massive short in the bZx protocol. The bZx contract relied on Uniswap, a decentralized exchange, to sell coins at what it assumed was market price. But, because the size of the attacker's flash-loan-based short order exceeded the amount that could be safely traded using the liquidity in the Uniswap exchange, the short increased the price of wBTC (wrapped Bitcoin) tokens on the Uniswap platform for this transaction. The attacker was then able to use this false rate to borrow wBTC against ETH (Ethereum, the native currency of the Ethereum blockchain), selling the newly borrowed wBTC into this falsely inflated price and obtaining ETH profit. A comprehensive post-mortem expos{\'e} summarizing the attack is available in~\cite{bzxpostmortem}.

%
The bZx attack blurs the line between arbitrage and code vulnerabilities. One can easily view the failure of the bZx protocol to check that the Kyber/Uniswap order routing had sufficient liquidity to complete its order as a code failure in the bZx protocol, in which case the attack represents a software exploit against bZx. But, one can also view this failure as a fatal design flaw, as it is impossible to determine during the execution of a DeFi transaction whether a given price represents the true market price outside the system in which the transaction is executing, in which case the attack represents financial arbitrage that more closely resembles the activity in traditional financial markets when inefficient financial products operate as intended.

This debate is not purely theoretical. One DeFi insurance product, Nexus Mutual, insured users of the bZx protocol against losses stemming from failures in the correct operation of the underlying smart contracts.  Nexus Mutual however did not cover issues in design, and would not need to pay out to its users if the smart contracts were deemed to be operating as intended.  After some debate, the Nexus Mutual fund decided to pay out to users who lost money in the bZx attack, as they reasoned that the bZx smart contract designers \emph{intended} to check the slippage the attacker took advantage of, and the attacker bypassed this check due to a coding error~\cite{bzxnexus}. While in this case the Nexus Mutual operators were able to come to a determination, we expect that future DeFi attacks will continue blurring the line between design and implementation issues, especially at the interfaces between various composable interoperable financial components.  In a DeFi context, both types of attacks can be viewed as a programmatic search for a reachable final state in the system in which the attacker profits.  The attack is far from unique; for example, just a few months later, \$100M was drained from a similar protocol in a similar exploit pattern~\cite{compoundsploit}.

\section{Generalized MEV and Composability Definitions}
\label{appendix:GMEV}
In Section~\ref{sec:model}, we defined $\kMEV$ which computes the $\MEV$ for a miner if it appends $k$ consecutive blocks to the chain and can change the transaction ordering across those $k$ blocks. In this section, we define weighted miner-extractable value, or $\WMEV$, which is weighted by the probability that a miner can mine $k$ consecutive blocks.

Formally, for a miner $P$, let $p_k$ be the probability that it mines exactly $k$ consecutive blocks. We assume that $p_k$ is not state dependent (at least in the short term). $p_k$ may be a function of the mining difficulty or the fraction of hash power owned by the miner. We can now define weighted $\MEV$ as:

\begin{definition}[Weighted MEV]
\[
    \WMEV(P,s) = \sum_{k = 1}^{\infty} p_k \cdot \kMEV(P,s)
\]
\end{definition}

\noindent As a simple example, consider a miner $P$ who controls a fraction $f$ of the total hash power. If we assume that mining is modeled as a random oracle and that there is no selfish mining, then the probability $p_k$ that $P$ mines exactly $k$ consecutive blocks is $p_k = f^k(1-f)$. Suppose further that the extra $\MEV$ obtained per extra mined block is a constant $m$. For this simplified example, we can compute the $\WMEV$ as:
\[
    \WMEV(P,s) = \sum_{k=1}^{\infty} f^k (1-f) (km) = \frac{fm}{(1-f)}
\]

Equipped with this, we can also generalize  the definition of Defi composability to include $\WMEV$. For this, $\MEV$ in Definition~\ref{def:single_defi_compose} will be replaced by $\WMEV$.

\paragraph{Miner cost.}
All of our notions of extractable value abstract out the actual cost incurred by the miner (e.g., the cost of equipment, electricity). We do this to make our definitions more broadly applicable. We note that the cost of a specific miner can be calculated independently, and subtracted from the extractable value to obtain the profit a miner could make from transaction reordering.

\iffull \else
\section{Additional Composability Examples}
\label{sec:morecomposability}

We further explore composability examples here. 

\subsection{Maker Contract Model}
\label{subsec:maker}

The Maker protocol allows users to generate and redeem the collateral-backed ``stablecoin" Dai through Collateralized Debt Positions (CDPs). Users can take out a loan in Dai by depositing the required amount of an approved cryptocurrency (e.g., \ETH) as collateral, and can pay back the loan in Dai to free up their collateral. If a user's collateral value relative to their debt falls below a certain threshold called the ``Liquidation Ratio" ($>1$), then their collateral is auctioned off to other users in order to close the debt position. Maker uses a set of external feeds as price oracles to determine the value of the collateral. A separate governance mechanism is used to determine parameters like the Liquidation Ratio, stability fees (interest charged for the loan), etc., and also to approve external price oracle feeds and valid collateral types. We consider here a simplified version of Maker's single-collateral CDP contract that does not model stability fees, or liquidation penalties. The contract $\makercon^{(\mathbf{X}, \mathbf{Y})}$ allows users to take out (or pay back) loans denominated in token $\mathbf{X}$ by depositing (or withdrawing) the appropriate collateral in token $\mathbf{Y}$, and allows for liquidation as soon as the debt-to-collateral ratio drops below the Liquidation Ratio. \fig~\ref{fig:maker_contract} details the contract.

It should be noted that the amount of collateral liquidated and received by the liquidator as well as the debt (in Dai) paid off by the liquidator in exchange for the collateral depends on the outcome of a 2-phase auction. If the auction is perfectly \emph{efficient}, the winning bidder pays off an equivalent amount of debt for receiving the offered collateral. On the other hand, when the auction is inefficient due to system congestion, collusion, transaction censoring, etc., the winning bidder can receive the entire collateral on offer without paying off an equivalent amount of debt. In our simplified Contract $\makercon^{(\mathbf{{X}, \mathbf{Y}})}$, we assume that liquidation is perfectly efficient.

\mypara{Uniswap as a price oracle for Maker.}
If Uniswap is used as a price oracle in the Maker contract, by reordering Uniswap transactions, and thereby manipulating the exchange rate, a miner can cause the value of a user's collateral to fall below the acceptable threshold, and trigger a liquidation event. Furthermore, the miner can buy the user's collateral tokens in the liquidation event, and later sell them for a profit when the exchange price returns to normal.
\fi

\iffull \else 

\fi


\iffull\else
\section{Uniswap CFF Model Details}
\label{sec:moreuni}

\begin{figure}[!t]
\fpage{0.9}{
    \begin{center}
        \textbf{Contract $\uniswapcon^{(\mathbf{X}, \mathbf{Y})}$}
    \end{center}
    
     $\functioncode \texttt{exchange(\textrm{InToken, OutToken, InAmount}):}$ \\
    \myind \ifcode $\accbalance(\calleracc)[\textrm{InToken}] \geq \textrm{InAmount}$ \thencode \\
    \myind \myind $x = \accbalance(\uniswapcon)[\textrm{InToken}]$ \\
    \myind \myind $y = \accbalance(\uniswapcon) [\textrm{OutToken}]$\\
    \myind \myind $\textrm{OutAmount} = y - {xy}/{(x + \textrm{InAmount})}$\\
    \myind \myind $\accbalance(\calleracc)[\textrm{InToken}] \minuseq \textrm{InAmount}$\\
    \myind \myind $\accbalance(\calleracc)[\textrm{OutToken}] \pluseq \textrm{OutAmount}$\\
    \myind \myind $\accbalance(\uniswapcon)[\textrm{InToken}] \pluseq \textrm{InAmount}$\\
    \myind \myind $\accbalance(\uniswapcon)[\textrm{OutToken}] \minuseq \textrm{OutAmount}$
    
    \myind \elsecode Output $\bot$
    
}
\caption{Simplified abstract Uniswap contract}
\label{fig:uniswap_contract}
\iffull \else \vspace{-2em} \fi
\end{figure}

\begin{figure*}[t]
    \lstset{language=C}
    \begin{lstlisting}
    claim <k>
         On UniswapV2 697323163401596485410334513241460920685086001293 swaps for ETH by providing 1300000000000000000000 COMP and 0 ETH with change 0 fee 1767957155464 ;
         On Sushiswap Miner swaps for ETH by providing Alpha:Int COMP and 0 ETH with change 0 fee 0 ;
         On UniswapV2 Miner swaps for Alpha COMP by providing ETH fee 0 ;
    
         => .K
         </k>
         <S> (Sushiswap in COMP) |-> 107495485843438764484770 (Sushiswap in ETH) |-> 49835502094518088853633 (UniswapV2 in COMP) |-> 5945498629669852264883 (UniswapV2 in ETH) |-> 2615599823603823616442 => ?S:Map </S>
         <B> .List => ?_ </B>
         requires (Alpha >Int 0) andBool (Alpha <Int 10000000000000000000000) //10**22
         ensures ({?S[Miner in COMP]}:>Int <=Int 0  ) andBool ({?S[Miner in ETH]}:>Int <=Int 0  )
    
         
    \end{lstlisting}
    \iffull \else \vspace{-2em} \fi
    \caption{Specification for Composition of Sushiswap and UniswapV2}
    \label{fig:kproof_spec}
    \iffull \else \vspace{-2em} \fi
    \end{figure*}
We detail simplified Uniswap and Maker contracts 
in \figs~\ref{fig:uniswap_contract} and~\ref{fig:maker_contract}.

\fi

\section{CFF Details}

\subsection{Why K?}
\label{subsec:considerations}
A natural question is why we chose the K framework for our implementation of the CFF. While CFF can be instantiated using any good formal verification tool, we found K code to be especially human readable and intuitive (mainly because of its concurrent semantics) for developers who may not be experts in formal verification. Prior work~\cite{hildenbrandt2018kevm} has already implemented full EVM semantics using K. We also chose K for qualitative reasons, detailed in Section~\ref{subsec:formal-verification}. We emphasize that our results are not tool-specific, and should be straightforward to replicate. 

\mypara{K vs.~Coq.}
As a specific comparison point, we explain our choice of K over Coq~\cite{leroy2009formal}, another popular formal verification tool. A comparison in~\cite{kvcoq} found similar performance numbers for the proving engines of both K and Coq; simple proofs took approximately the same amount of real time on test hardware. We posit (though defer detailed study) that performance differences would be minor. As~\cite{kvcoq} points out, however, models in K are always executable, and allow for concrete inputs to be evaluated. On the other hand, in Coq, execution must be defined separately as its own function and proved equivalent to the relational definition of the corresponding models. We believe that this additional step would impose substantial overhead on model development our framework.

\iffull
\subsection{Writing a CFF Model}
\label{sec:morek}

We now provide additional description of the operations executed by our model in Figure~\ref{fig:uniswap_k_contract}, which may prove helpful when defining your own CFF model.

Line 2 is the first such operation, and creates a local variable in the model state which binds the amount to send according to the AMM formula in a variable called AmountToSend.  USwapBalanceIn and USwapBalanceOut, the balance of the Uniswap contract in the input and output tokens, are used in this calculation.  These variables are sourced from Line 9, where they are matched in the global Ethereum state S.

Lines 3-6 use the special ``gets" operator, which we give operational semantics to separately, to change the system state by debiting and crediting the appropriate balances from Uniswap and the user in the traded asset; the user here receives AmountToSend tokens for TradeAmount tokens sent as input.  The var function is a built-in function indicating that a variable bound in the current scope (rather than the Ethereum state) should be used.

Note some special K keywords are required for our semantic rule.  Firstly, the \texttt{...} keyword specifies that anything can successfully match in this location when applying the rule.  We do not care what operations after the first execution are pending in the model when applying the rule, for example (Line 7).  The rule also applies regardless of the contents of S outside of USwapBalance (Line 9), and regardless of what transactions the miner has already included in their block B in the model (Line 10).

Further such details on K can be found at the K tutorial at \url{http://kframework.org/}, or by reading our modeling code and documentation on Github.

\iffull 

\fi

\subsection{Refinements to the Abstract Model}
\label{sec:refinements}
\subsubsection{AMM Refinements}
We now refine the abstract model, intended to illustrate the core functionality of a Uniswap-like AMM, to be fully faithful to the deployed Uniswap contract.  To do so, we must refine our trade rule to take into account the rounding used in the real-world Uniswap trade functions, which come with some degree of error/imprecision.  This imprecision is described and formalized in the model in~\cite{verifiedsc}, a superset of the formal semantics in our work that was used to verify the Uniswap protocol before deployment.

We must also add semantic rules for liquidity provision and removal transactions, which further affect the Uniswap contract and drive relevant state updates.  Lastly, we must take into account all code paths in the Uniswap deployed Solidity contract.  Our fully refined model that accurately reflects real-world Uniswap arithmetic is available in \texttt{models/uniswap} on our Github repository.

\subsubsection{MakerDAO Refinements and Liquidation Auction}
We refine our abstract MakerDAO model by adding a rule to update stability fees and account for this stability fees to calculate the CDP debt accurately. We also combine the CDP manipulation actions into one single rule, to accurately reflect the deployed contract. Next, we add rules for CDP fungibility, i.e. transferring debt and collateral between CDPs. We now discuss the subtleties of MEV extraction in a liquidation auction and replace the efficient auction outcome in our abstract model accordingly.

We analyze the optimum MEV assuming that all network miners are behaving to maximize total MEV -- a rational decision from a miner standpoint. Optimum MEV is achieved when the miner is able to censor competing bids, and win the entire collateral on offer in the second phase of the auction. We thus refine the liquidation auction outcome in our abstract model to receive the entire collateral on offer. Note that if some miners defect to reduce the efficiency of MEV extraction, it is possible that only some constant percentage of the optimum MEV will remain extractable.
\fi
\subsection{Mechanized Proofs}
\label{sec:kproof}
\begin{figure}
\centering
\includegraphics[scale=0.45]{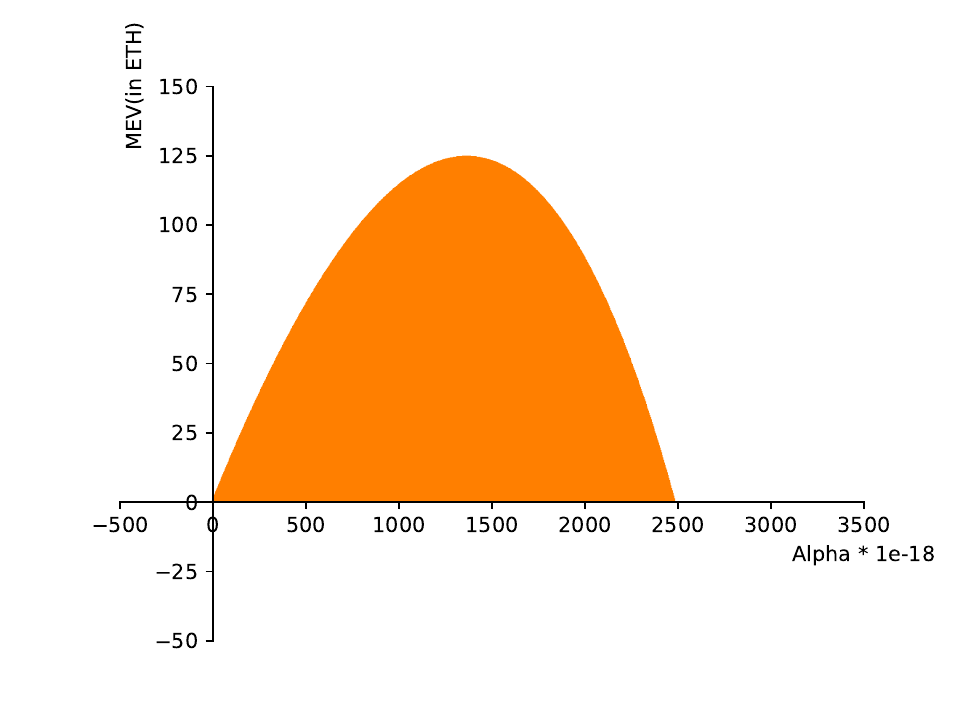}
\iffull \else \vspace{-1.5em} \fi
\caption{The region boundary represents MEV extractable by the miner as a function of the input variable (size of its trade). The maximum value is 123 ETH.}
\label{fig:counterexample}
\iffull \else \vspace{-2em} \fi
\end{figure}
We now provide the details of an example specification used to check the security of the composition of Sushiswap and UniswapV2 in \fig~\ref{fig:kproof_spec}. This example captures one of the biggest arbitrage captures\footnote{\href{https://etherscan.io/tx/0x2c79cdd1a16767e90d55a1598c833f77c609e972ea0fa7622b70a67646a681a5}{https://etherscan.io/tx/0x2c79cdd1a16767e90d55a1598c833f77c609e97\\2ea0fa7622b70a67646a681a5}} observed on-chain involving two AMMs as reported in~\cite{qin2021quantifying}. The hex addresses for users are converted to base 10 integers. The initial state for Sushiswap and UniswapV2 is specified in the \scell cell. The last two transactions in the \kcell cell represent the transactions inserted by the Miner according to the strategy described in Section~\ref{sec:composability}. Note that the Miner transaction can be symbolic, $Alpha$ being the symbol representing the size of the swap Miner does denominated in Wei (1 ETH = 1e18 Wei). The \texttt{requires} clause specifies the constraints on $Alpha$, essentially denoting the Miner budget. Finally, the \texttt{ensures} clause represents the claim that the Miner is not able to extract any value regardless of the way specified transactions are reordered.

Our tool derives a counterexample to the claim with the MEV formula given by (plotted in \fig~\ref{fig:counterexample}):

{\footnotesize \begin{verbatim}
-1 - 2147460244936306246609000 * Alpha / ( 997 * 
( 7245498629669852264883 - Alpha ) ) + 997 * Alpha 
* 49835502094518088853633 / ( 997 * Alpha + 
107495485843438764484770000)
\end{verbatim}
}

\subsection{Bounding the MEV for AMMs}
\label{sec:ammbounds}

Although the price offered by the AMMs we study at the end of a given set of transactions is independent of the order of the transactions~\cite{buterin2017path}, individual users' transactions get different prices depending on the order of the transactions. A miner can thus influence the value individual users get for their trades by choosing a different order for the transactions. For each user, there is an optimal and a worst case ordering. 

Let $b_h$ be the highest ETH-value of a trader's account after a block has elapsed, assuming access to a price oracle for pricing a user's tokens at an invariant market price for the time of trade execution.  Let $b_l$ be the lowest such value.  It is therefore rational for the trader to pay $b_h - b_l - \epsilon$ to miners as a bribe.  For miners to elicit this bribe, they would deploy a contract allowing each user of an AMM to deposit ETH.  They would then credibly commit to mining the order resulting in $b_l$ if no funds were available.  Otherwise, they would submit both $b_l$ and $b_h$, along with associated proofs, to the smart contract, which would enforce the order resulting in $b_h$, pay the miner $b_h - b_l - \epsilon$, and pay the trader $\epsilon$.  Note that paying such a contract is a strictly dominant from a trader point of view, as the trader profits $\epsilon$ more than without paying into such a contract. Introducing this new contract increases MEV by exactly $b_h - b_l - \epsilon$ through a direct payment by inspection; in our experiments, we assume $\epsilon$ is negligible when compared to $b_h - b_l$: since being paid this $\epsilon$ is a strictly dominant strategy, miners need only compensate users for the low cost of locking capital (which can be removed freely) in the bribery contract.

When analyzing attacks like this on DeFi protocols, a natural question becomes how to efficiently and thoroughly uncover reordering-based differences that would allow for an accurate measurement of $b_h$ and $b_l$, and therefore the MEV in the presence of such contract composition.  It is this measurement on which we focus in our AMM experiments.

\iffull
\subsection{MEV Deep Dive}
\label{sec:deepdive}

In this section, we will explore in more detail our top MEV example, which occurred in Ethereum block 10968577 in the YFI-WETH pair on Sushiswap, primarily surrounding MEV-creating transaction {\small 0x8a9d88084eb3a451fcd1c28f1851d0-ced03e7665499a362942978ff13d5c19d4}.

In this transaction, a user sold 40 YFI tokens, a popular and extremely valuable Ethereum token that was in the middle of an upwards price rally, on an automated decentralized exchange liquidity aggregator called 1inch.exchange.  As part of this aggregation, 1inch chose to execute a sale of 22 YFI tokens on SushiSwap, worth USD\$550,000 at the then-price of USD\$25,000 per token.

\begin{figure}[h!]
\centering
\begin{tabular}{|c|c|c|}
\toprule

\textbf{User} & \textbf{Swap Performed} & \textbf{Amount of Input Token} \\
\midrule
\textbf{A} & \color{blue}YFI\color{black} $\rightarrow$ \color{orange}WETH\color{black} & 22000000000000000000\\
\midrule
\textbf{B} & \color{orange}WETH\color{black} $\rightarrow$ \color{blue}YFI\color{black} & 53788258395569781028 \\
\midrule
\textbf{B} & \color{orange}WETH\color{black} $\rightarrow$ \color{blue}YFI\color{black} & 6784028349336991312 \\
\midrule
\textbf{C} & \color{orange}WETH\color{black} $\rightarrow$ \color{blue}YFI\color{black} & 103266050000000000000 \\
\midrule
\textbf{D} & \color{orange}WETH\color{black} $\rightarrow$ \color{blue}YFI\color{black} & 300000000000000000000 \\
\midrule
\textbf{D} & \color{orange}WETH\color{black} $\rightarrow$ \color{blue}YFI\color{black} & 4970140364366149478 \\
\midrule
\textbf{D} & \color{orange}WETH\color{black} $\rightarrow$ \color{blue}YFI\color{black} & 6984067876806377830 \\
\midrule
\textbf{D} & \color{orange}WETH\color{black} $\rightarrow$ \color{blue}YFI\color{black} & 300000000000000000000 \\
\midrule
\textbf{D} & \color{orange}WETH\color{black} $\rightarrow$ \color{blue}YFI\color{black} & 150000000000000000000 \\
\bottomrule
\end{tabular}

\caption{Mined actual transaction ordering of the top MEV block in our sample} \label{figure:topmev}

\end{figure}

Because the user placed a large market order on a set of automated market makers, this naturally created an arbitrage opportunity to buy YFI at this newly-depressed price, selling it into more liquid off-chain and on-chain markets which still reflected the real market valuation.  Figure~\ref{figure:topmev} shows the ordering of transactions on the network, with user ``A" being the user selling YFI tokens on 1inch, and users B-D representing a set of arbitrage bots that programatically bought and re-sold tokens from SushiSwap when user A created an arbitrage opportunity.

The MEV here is obvious, as the ability for a miner to essentially take the trades performed by the bots furnishes a more profitable opportunity for the miner than the bots, who can also execute the bots' failed transactions.

The optimal ordering found by our tool for user A is D $\rightarrow$ D $\rightarrow$ D $\rightarrow$ C $\rightarrow$ B $\rightarrow$ B $\rightarrow$ D $\rightarrow$ D $\rightarrow$ A, where the user's trade is executed after the trade of the arbitrage bots.  This makes sense, as the arbitrage bots cannot take advantage of the user's price movement to re-arbitrage Uniswap back to market parity.  Conversely, the worst order for user A is A $\rightarrow$ D $\rightarrow$ D $\rightarrow$ D $\rightarrow$ C $\rightarrow$ B $\rightarrow$ D $\rightarrow$ B $\rightarrow$ D, which is very similar to the order actually mined by the miner.

Note that the arbitrage on the network are already somewhat effective at extracting MEV from Uniswap, a result that is expected given the conclusions of~\cite{daian2020flashboys}.  However, the miner can still increase profit even further over these bots, due to the fine-grained control it can exercise in ordering that is likely hard to achieve through the public priority gas auctions described in~\cite{daian2020flashboys}.  We thus posit that this example shows not only the existence of MEV that can be exploited through generic tooling, but also the relative inefficiencies of current arbitrage bots on the network, who are unable to achieve the maximally optimal order even when opportunity sizes are large.

\fi



\end{document}